\begin{document}
\title{On quantum channels that destroy negative conditional entropy}

\author{P V Srinidhi} 
\email{srinidhi.pv@research.iiit.ac.in}

\author{Indranil Chakrabarty}
\email{indranil.chakrabarty@iiit.ac.in}

\author{Samyadeb Bhattacharya}
\email{sbh.phys@gmail.com}

\affiliation{Centre for Quantum Science and Technology, International Institute of Information Technology Hyderabad, Gachibowli, Hyderabad-500032, Telangana, India.\\Center for Security, Theory and Algorithmic Research, International Institute of Information Technology Hyderabad, Gachibowli, Hyderabad-500032, Telangana, India.}

\author{Nirman Ganguly}
\email{nirmanganguly@hyderabad.bits-pilani.ac.in}

\affiliation{Department of Mathematics, Birla Institute of Technology and Science Pilani, Hyderabad Campus, Telangana-500078, India.}

\begin{abstract}
Counter-intuitive to classical notions, quantum conditional entropy can be negative, playing a pivotal role in information-processing tasks. This article delves deeply into quantum channels, emphasizing negative conditional entropy breaking (NCEB) channels and introducing negative conditional entropy annihilating (NCEA) channels. We characterize these channels from topological and information-theoretic perspectives, examining their properties when combined serially. We also examine NCEB channels in parallel. Our exploration extends to complimentary channels associated with NCEB, leading to the introduction of information-leaking channels. Utilizing the parameters of the standard depolarizing channel, we provide tangible examples and further characterization. We demonstrate the relationship of NCEB and NCEA with newly introduced channels like coherent information (CIB) breaking and mutual information (MIB) breaking, along with standard channels like zero capacity channels. Preservation of quantum resources is an integral constituent of quantum information theory. Recognizing this, we lay prescriptions to detect channels that do not break the negativity of conditional entropy, ensuring the conservation of this quantum resource.
\end{abstract}

\maketitle

\section{Introduction} \label{sec:intro}
Quantum resource theory \cite{chitambar2019} is a standard framework to study quantum information theory, where we categorize quantum states based on their usefulness in information processing tasks. Such useful states are termed as resources, and the rest as free states. For example, quantum entanglement \cite{einstein1935, peres1997, bohr1935, bell1964, horodecki2009}, which refers to states that cannot be expressed as a convex combination of product states, is one of the most significant resources in quantum information theory. In the resource theory of entanglement, separable states act as free states. Entanglement is a key ingredient in tasks like teleportation \cite{bennett1993,horodecki1996,chakrabarty2010, sohail2023}, dense coding \cite{bennett1992, srivastava2019, das2015, patro2017, roy2018}, remote state preparation \cite{bennett2005, pati2000}, key generation \cite{bennett2014, zukowski1993, bennett1984, shor2000, ekert1991}, secret sharing \cite{hilery1999, cleve1999, bandyopadhyay2000, adhikari2010probabilistic, ray2016, singh2023,abrol2024secret}, routing quantum information \cite{sazim2015} and setting up quantum network between various quantum processors \cite{wehner2018}.

However, the mere presence of entanglement is not the sole requirement for all quantum tasks. Tasks like state merging \cite{horodecki2006}, dense coding \cite{bennett1992}, quantum memory \cite{berta2010}, and one-way distillation processes \cite{devetak2005} require states with negative conditional entropy. In the resource theory of negative conditional entropy, states with non-negative conditional von Neumann entropy (CVENN) are considered free states. The set of CVENN states for $d \otimes d$ dimensional systems is convex and compact, thereby facilitating the detection of states with negative conditional entropy \cite{vempati2021}. Additionally, states maintaining non-negative conditional entropy even under global unitary action, termed absolute non-negative conditional von Neumann entropy states (ACVENN), have been characterized \cite{patro2017}.

In practical settings, environmental interactions are ubiquitous in information processing tasks, and it is important to understand how a quantum state with some potential as a resource evolves under such conditions. These interactions are modeled as quantum channels, which are trace-preserving completely positive linear map $\mathcal{N}: \rho \rightarrow \sum_i N_i \rho N_i^\dagger$ with $\sum_i N_i^\dagger N_i=I $, and $N_i s$ are the Kraus operators associated with any such map. The noise in such an environment can be significant enough to cause the state to lose its value as a resource. In the resource theory of entanglement, quantum channels that turn any state into a separable state are called Entanglement Breaking (EB) channels. A map $\mathcal{N}$ is entanglement breaking if $(id \otimes \mathcal{N})(\rho)$ is always separable, where $id$ is the identity channel. It was shown that an entanglement breaking channel can be written in the Holevo form \cite{horodecki2003},
\begin{equation}
\mathcal{N}(\rho)= \sum_{k=1}^m R_k Tr(F_k),    
\end{equation}
where $R_k$s are density matrices, and $F_k$s are positive operator value measurement operators (POVMs) and $\sum_k F_k=I$. The choice of $R_k$s and $F_k$s are not unique, and different operators give rise to various types of entanglement-breaking channels. In a similar spirit, we also have Entanglement Annihilating (EA) channels. These channels destroy the entanglement within a subsystem of a composite system \cite{morav_kov2010}.

As noted before, the negativity of quantum conditional entropy can also be considered a resource akin to entanglement. While  EB and EA channels are known to impact quantum states with negative conditional entropy by destroying the entanglement, these channels do not encompass the full scope of the relevant channels. Therefore, quantum channels that affect the negativity of conditional entropy(i.e., convert any state to a state with non-negative conditional entropy) without necessarily disturbing entanglement represent distinct classes of channels and warrant attention. We refer to channels that destroy the negative conditional entropy between a bipartition as Negative Conditional Entropy Breaking (NCEB). In a recent article, information-theoretic resource-breaking channels pertaining to negative conditional entropy and fully entangled fraction have been introduced \cite{muhuri2023}. The authors have characterized such channels and provided methods to detect channels that are not NCEB. On the other hand, Negative Conditional Entropy Annihilating (NCEA) refers to the class of channels that destroy the negative conditional entropy within a system.

In this article, we take the characterization of NCEB further. Besides the usual topological characterization of NCEB channels, we formulate an information-theoretic characterization of such channels. In particular, we study the complementary channel of NCEB channels to discover their information-leaking property. We also exemplify these channels in terms of the range of the parameters of depolarizing channels. We show that the set of NCEB channels is equivalent to channels with zero coherent information. In addition, we show that the set of  NCEB channels acts as a superset to EB channels, zero capacity channels, and the newly introduced Mutual Information Breaking (MIB) channels. Furthermore, we introduce the notion of NCEA channels, which differ from NCEB and provide for their characterization. We exemplify NCEA channels through global depolarizing channels and produce a sufficiency condition for them to become negative conditional entropy annihilating. We also connect NCEA channels to conditional Von-Neumann entropy non-decreasing (NCVE) channels. In a related perspective, when implementing a quantum information processing protocol, one would like to know which channels to select that do not result in a resource loss. It is here that the detection of non-NCEB and non-NCEA channels becomes imperative. We lay down prescriptions for the detection of non-NCEB channels.

The remainder of this article is structured as follows: In Section \ref{sec:preliminaries}, we present the necessary preliminary concepts. Section \ref{sec:NCEBA} provides definitions of NCEB and NCEA channels, examples, and properties. In Section \ref{sec:relations}, we explore the relationships between NCEB, NCEA, and other known channels, including the newly introduced CIB and MIB channels. Section \ref{sec:convex_compact} focuses on the topological characterization of NCEB and NCEA channels, enabling the detection of non-NCEB channels. Finally, we conclude with a summary of our findings in Section \ref{sec:conclusion}.

\section{Preliminaries} \label{sec:preliminaries}
A quantum system $A$ is associated with a Hilbert space $\mathcal{H}_A$. Similarly, a bipartite system $AB$ 
is described by a Hilbert space $\mathcal{H}_{AB} = \mathcal{H}_A \otimes \mathcal{H}_B$. A quantum system isomorphic to $A$ is denoted by $\tilde{A}$, with $\mathcal{H}_{\tilde{A}}$ as its Hilbert space. $B(\mathcal{H})$ represents the set of bounded linear operators on $\mathcal{H}$, and $B_{+}(\mathcal{H})$ is the subset of positive semi-definite linear operators. The set $D(\mathcal{H})$ denotes the collection of unit-trace, positive semi-definite linear operators, commonly known as density operators. Thus, the state of a quantum system $AB$ is represented by a density operator $\rho_{AB} \in D(\mathcal{H}_{AB})$. Lastly, the identity operator is given by $I$, and the identity channel is given by $id$.

\subsection{Information measures on quantum states} \label{sec:preliminaries_info_measures}
The Von Neumann entropy of a quantum system with density operator $\rho$ is defined as
\begin{equation}
    S(\rho) = - \text{Tr}\{\rho \log \rho \},
\end{equation} 
where the base of the logarithm is $2$. Quantum conditional entropy measures the uncertainty associated with one part of a bipartite system (e.g., system $A$) when given complete knowledge of the other part (system $B$). Given the density operator $\rho_{AB}$ of a bipartite system, the conditional entropy of the state is expressed as
\begin{equation} 
    S(A|B)_\rho = S(AB)_\rho - S(B)_\rho
\end{equation} 
with $S(B)_\rho$ as the entropy of the marginal operator $\rho_B = \text{Tr}_A\{\rho_{AB}\}$. The expression is derived from the chain rule, $S(AB)_\rho = S(A|B)_\rho + S(B)_\rho$. Conditional entropy is uniformly continuous with respect to the trace norm \cite{alicki2004, winter2016}. Additionally, the negativity of conditional entropy is a signature of quantum systems, as classically, this is impossible. Negative values of conditional entropy indicate the presence of entanglement in the system; hence, conditional entropy itself is an important feature in several information processing protocols. However, entanglement does not necessitate quantum systems to possess negative conditional entropy.

Likewise, quantum mutual information measures the total classical and quantum correlations present in the state. For a given bipartite system described by its density operator $\rho_{AB}$, the mutual information between the subsystems $A, B$ is defined as
\begin{equation}
\begin{split}
    I(A;B) & = S(A)_\rho + S(B)_\rho - S(AB)_\rho\\
    & = S(A)_\rho - S(A|B)_\rho\\
    & = S(B)_\rho - S(B|A)_\rho.
\end{split}
\end{equation} Quantum mutual information is always non-negative and vanishes if and only if the subsystems are uncorrelated. It is a symmetric measure, $I(A; B) = I(B; A)$. Quantum mutual information has been extended to multiparty scenarios to provide measures for quantum correlation \cite{chakrabarty2011,modi2012}. Additionally, this measure underpins specific capacity measures for quantum channels, such as Holevo capacity and entanglement-assisted classical capacity \cite{wilde_qit2016}.
 
\subsection{Coherent Information of Quantum Channels} \label{sec:preliminaries_coh_info}
Quantum channels are mathematical models that describe the evolution of quantum systems. They can also be viewed as a model of the communication medium through which one party, A(Alice), transmits information in the form of a quantum system to the other party, B(Bob). Analogous to classical channels, we can quantify the information-carrying capacity of a quantum channel under different settings via capacity measures. Surveys and discussions of such capacity measures have been conducted in \cite{smith2010quantum, watrous2018, gyongyosi2018, holevo2020}. One such measure, namely the coherent information of a quantum channel, captures the achievable rate of quantum information transmission through a quantum channel in a reliable manner. Consider a finite-dimensional quantum system $AA'$, with $\mathcal{N}:B(\mathcal{H}_{A'}) \rightarrow B(\mathcal{H}_B)$ ( $\mathcal{N}_{A' \rightarrow B}$) representing a quantum channel into the finite-dimensional space $\mathcal{H}_B$. Then, the coherent information of the channel $\mathcal{N}_{A' \rightarrow B}$ is 
\begin{equation}
\label{eqn:coh_inf_chnl}
Q(\mathcal{N}_{A' \rightarrow B}) = \max_{\phi_{AA'}} I(A \rangle B)_\sigma,    
\end{equation}
where $\sigma_{AB} = \mathbb{I}_A \otimes \mathcal{N}_{A'\rightarrow B}(\phi_{AA'})$ is the channel output for the pure state input $\phi_{AA'}$.  $I(A \rangle B)_\sigma = S(B)_\sigma - S(AB)_\sigma$ captures the coherent information of the output state. The coherent information of the quantum channel can also be expressed as,
\begin{equation}
\begin{split}
  Q(\mathcal{N}_{A' \rightarrow B}) & = \max_{\rho_{A'}} I_C(\rho_{A'}, \mathcal{N}_{A'\rightarrow B})\\
  & = \max_{\phi_{AA'}}[S(B)_\psi - S(E)_\psi],
\end{split}
\end{equation}
where 
\begin{equation}
\begin{split}
    I_C(\rho_{A'}, \mathcal{N}_{A'\rightarrow B}) &= \\
    &S(\mathcal{N}_{A'\rightarrow B}(\rho_{A'}) - S(\mathcal{N}^C_{A'\rightarrow B}(\rho_{A'})))
\end{split}
\end{equation} with $\mathcal{N}^C_{A'\rightarrow B}$ as the complementary channel to $\mathcal{N}_{A'\rightarrow B}$ and $|\psi_{ABE}\rangle = U^{\mathcal{N}}_{A'\rightarrow BE} |\phi_{AA'}\rangle$ using $U^{\mathcal{N}}_{A'\rightarrow BE}$, the isometric extension of the channel \cite{wilde_qit2016}. Finally, the quantum capacity of a quantum channel \cite{devtak2005Channel, barnum1998, barnum2000, shorSLMath2002, llyod1997} captures the information-carrying capacity of the channel in an asymptotic setting and is formally defined as
\begin{equation}
     \mathscr{Q}(\mathcal{N}) = \lim_{n \rightarrow \infty} \frac{1}{n} Q(\mathcal{N}^{\otimes n}).
\end{equation}

\section{Negative Conditional Entropy Annihilating and Breaking Channels}\label{sec:NCEBA}
We introduce Negative Conditional Entropy Annihilating (NCEA) channels and discuss previously introduced Negative Conditional Entropy Breaking (NCEB) channels \cite{muhuri2023}. Like entanglement breaking channels, NCEB channels affect the conditional entropy across the bipartition of a composite system. On the other hand, NCEA channels affect the conditional entropy of the system they act on and may not affect the same property across partitions. Section \ref{sec:NCEBA_Def} covers the definitions of both NCEB and NCEA channels. The properties of such channels are discussed in section \ref{sec:NCEBA_Props}. Here, we investigate the behavior of these channels in series and NCEB channels in parallel combinations. Furthermore, we consider the action of complementary channels to NCEB to reveal their effects on the environment or adversary coupled with the original system. Finally, we provide examples of NCEB and NCEA channels in section \ref{sec:NCEBA_Ex}.

In later sections, we formulate relations of NCEB and NCEA with other known classes of quantum channels. Furthermore, we discuss previous results  \cite{muhuri2023} for a complete and pedagogic discussion. Our treatment of NCEB channels presents perspectives that complement the previous work. 
\setlength{\fboxsep}{3pt} 
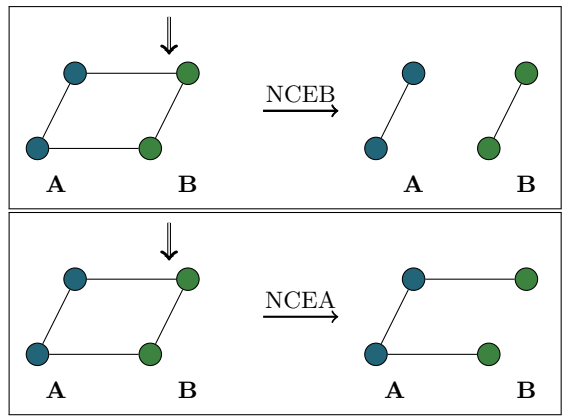
\begin{figure}
\begin{subfigure}[b]{0.5\textwidth}
   \fbox{
        \begin{tikzpicture}[scale=0.5]
            \node[draw, circle, fill={rgb:red,41; green,120; blue,142}, inner sep=3pt] (A) at (0,0) {};
            \node[draw, circle, fill={rgb:red,94; green,205; blue,101}, inner sep=3pt] (B) at (3,0) {};
            \node[draw, circle, fill={rgb:red,41; green,120; blue,142}, inner sep=3pt] (C) at (-1,-2) {};
            \node[draw, circle, fill={rgb:red,94; green,205; blue,101}, inner sep=3pt] (D) at (2,-2) {};
            \draw (A) -- (B);
            \draw (A) -- (C);
            \draw (B) -- (D);
            \draw (C) -- (D);
            \draw[->] (2.5, 1.5) -- (2.5, 0.5);
            \draw[double, ->] (2.5, 1.5) -- (2.5, 0.5);
            \node[below] at (-0.5, -2.5) {\textbf{A}};
            \node[below] at (3, -2.5) {\textbf{B}};
            \draw[->, thick] (5,-1) -- (7,-1) node[midway, above] {NCEB};
            \node[draw, circle, fill={rgb:red,41; green,120; blue,142}, inner sep=3pt] (F) at (9,0) {};
            \node[draw, circle, fill={rgb:red,94; green,205; blue,101}, inner sep=3pt] (G) at (12,0) {};
            \node[draw, circle, fill={rgb:red,41; green,120; blue,142}, inner sep=3pt] (H) at (8,-2) {};
            \node[draw, circle, fill={rgb:red,94; green,205; blue,101}, inner sep=3pt] (I) at (11,-2) {};
            \draw (F) -- (H);
            \draw (G) -- (I);
            \node[below] at (9, -2.5) {\textbf{A}};
            \node[below] at (12, -2.5) {\textbf{B}};
            \draw[->, thick] (5,-1) -- (7,-1);
        \end{tikzpicture}
    }
    \phantomcaption{}
    \label{subfig:NCEB_Channel}
\end{subfigure}
\begin{subfigure}[b]{0.5\textwidth}
    \fbox{
        \begin{tikzpicture}[scale=0.5]
            \node[draw, circle, fill={rgb:red,41; green,120; blue,142}, inner sep=3pt] (A) at (0,0) {};
            \node[draw, circle, fill={rgb:red,94; green,205; blue,101}, inner sep=3pt] (B) at (3,0) {};
            \node[draw, circle, fill={rgb:red,41; green,120; blue,142}, inner sep=3pt] (C) at (-1,-2) {};
            \node[draw, circle, fill={rgb:red,94; green,205; blue,101}, inner sep=3pt] (D) at (2,-2) {};
            \draw (A) -- (B);
            \draw (A) -- (C);
            \draw (B) -- (D);
            \draw (C) -- (D);
            \draw[->] (2.5, 1.5) -- (2.5, 0.5);
            \draw[double, ->] (2.5, 1.5) -- (2.5, 0.5);
            \node[below] at (-0.5, -2.5) {\textbf{A}};
            \node[below] at (3, -2.5) {\textbf{B}};
            \draw[->, thick] (5,-1) -- (7,-1) node[midway, above] {NCEA};
            \node[draw, circle, fill={rgb:red,41; green,120; blue,142}, inner sep=3pt] (F) at (9,0) {};
            \node[draw, circle, fill={rgb:red,94; green,205; blue,101}, inner sep=3pt] (G) at (12,0) {};
            \node[draw, circle, fill={rgb:red,41; green,120; blue,142}, inner sep=3pt]  (H) at (8,-2) {};
            \node[draw, circle, fill={rgb:red,94; green,205; blue,101}, inner sep=3pt] (I) at (11,-2) {};
            \draw (F) -- (H);
            \draw (F) -- (G);
            \draw (H) -- (I);
            \node[below] at (8.5, -2.5) {\textbf{A}};
            \node[below] at (12, -2.5) {\textbf{B}};
        \end{tikzpicture}
    }
    \phantomcaption{}
    \label{subfig:NCEA_Channel}
\end{subfigure}
\caption{The figure illustrates the effects of two distinct types of quantum channels on a bipartite quantum system $AB$. (a) The Negative Conditional Entropy Breaking (NCEB) channel affects the correlations between subsystems A and B, breaking the entanglement across the bipartition. (b) The Negative Conditional Entropy Annihilating (NCEA) channel, on the other hand, impacts the correlations within the specific subsystem it acts on.}
\end{figure}

\subsection{Definitions} \label{sec:NCEBA_Def}
Given a $d \otimes d$ bipartite system $AB$, let $\mathcal{S}_{CVENN}(\mathcal{H}_{AB})$ denote the set of quantum states with non-negative conditional entropy between $A$ and $B$. Similarly, let $\mathcal{S}_{CVENN}(\mathcal{H}_{B})$ be the set of quantum states having non-negative conditional entropy within the subsystem $B$. This is expressed as
\begin{equation}
\begin{split}
    \mathcal{S}_{CVENN}(\mathcal{H}_{AB}) = \{ \rho  \in D(\mathcal{H}_{AB}) | S(A|B)_\rho \ge 0 \}, \\ \mathcal{S}_{CVENN}(\mathcal{H}_B) = \{ \rho \in D(\mathcal{H}_B) | S(B_1 | B_2)_\rho \ge 0\}
\end{split}
\end{equation} where $B_1, B_2$ are a fixed partitions of the $B$ subsystem.

As depicted in the figure \ref{subfig:NCEB_Channel}, an NCEB channel acts on one part of a bipartite system $AB$ (e.g., system $B$) and destroys the negative conditional entropy present between $A$ and $B$ subsystems. In other words, the set of negative conditional entropy-breaking channels acting on $d \otimes d$-dimensional system is described as,
\begin{equation}
\begin{split}
     NCEB^{(d)} & = \{\mathcal{N}_{B\rightarrow \tilde{B}}| \forall \rho_{AB} \in D(\mathcal{H}_{AB}), S(A|\tilde{B})_\sigma \ge 0, \\&  \text{where }  \sigma_{A\tilde{B}} = id_A \otimes \mathcal{N}_{B\rightarrow \tilde{B}}(\rho_{AB})\}.
\end{split}
\end{equation} Equivalently, a channel $\mathcal{N}_{B\rightarrow\tilde{B}}$ is in  $NCEB^{(d)}$ if
\begin{equation}
(id_A \otimes  \mathcal{N}_{B\rightarrow\tilde{B}})(D(\mathcal{H}_{AB})) \subset \mathcal{S}_{CVENN}(\mathcal{H}_{A\tilde{B}}).   
\end{equation}

Likewise, NCEA channels act on subsystem $B$ and annihilate the negative conditional entropy within the $B$ (refer figure \ref{subfig:NCEA_Channel}). For a $d$-dimensional system, the set of negative conditional annihilating channels is expressed as
\begin{equation}
\begin{split}
    NCEA^{(d)} & = \{\mathcal{N}_{B\rightarrow \tilde{B}}| \forall \rho_{B} \in D(\mathcal{H}_{B}), S(\tilde{B}_1|\tilde{B}_2)_\epsilon \ge 0, \\
    & \text{ where }  \epsilon_{\tilde{B}} =  \mathcal{N}_{B\rightarrow \tilde{B}}(\rho_{B}) \}.
\end{split}
\end{equation} From the definition above, it follows that a channel $\mathcal{N}_{B\rightarrow\tilde{B}}$ is $NCEA$ if 
\begin{equation}
    \mathcal{N}_{B\rightarrow\tilde{B}}(D(\mathcal{H}_B)) \subset \mathcal{S}_{CVENN}(\mathcal{H}_{\tilde{B}}).   
\end{equation}

\subsection{Properties} \label{sec:NCEBA_Props}
\noindent \textbf{I. NCEB/NCEA channels in series:} Given two quantum channels $\mathcal{N}_1, \mathcal{N}_2$, let the serial combination of these channels be denoted with $\mathcal{N}_1 \circ \mathcal{N}_2$ and expressed as $\mathcal{N}_1 \circ \mathcal{N}_2(\rho) = \mathcal{N}_1(\mathcal{N}_2(\rho)))$ for input state $\rho$.  We show that for a serial combination of two channels $\mathcal{N}_1$ and $\mathcal{N}_2$ both taken from  $NCEB^{(d)}$ or $NCEA^{(d)}$, the resultant channel will always belong to the same set.\\ 

\begin{theorem}
Let  $\mathcal{N}_1, \mathcal{N}_2 \in NCEB^{(d)}$ , then  $\mathcal{N}_1 \circ \mathcal{N}_2 \in NCEB^{(d)}$.   
\end{theorem}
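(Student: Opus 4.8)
The plan is to reduce the claim to the defining universal property of $NCEB^{(d)}$, namely that an NCEB channel maps \emph{every} bipartite input on the $B$-side to an output with non-negative conditional entropy. The essential structural fact I would use is that local operations tensorize and compose independently: for any input $\rho_{AB}$,
\begin{equation}
id_A \otimes (\mathcal{N}_1 \circ \mathcal{N}_2) = (id_A \otimes \mathcal{N}_1) \circ (id_A \otimes \mathcal{N}_2),
\end{equation}
which holds because $id_A \circ id_A = id_A$ and the two factors act on distinct tensor legs. This lets me process the composite channel one stage at a time, consistent with the convention $\mathcal{N}_1 \circ \mathcal{N}_2(\rho) = \mathcal{N}_1(\mathcal{N}_2(\rho))$ in which $\mathcal{N}_2$ is applied first.

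Concretely, I would fix an arbitrary $\rho_{AB} \in D(\mathcal{H}_{AB})$ and introduce the intermediate state $\tau_{A\tilde{B}} = (id_A \otimes \mathcal{N}_2)(\rho_{AB})$. The crucial point is that $\tau_{A\tilde{B}}$ is itself a legitimate bipartite density operator in $D(\mathcal{H}_{A\tilde{B}})$, hence an admissible input for the next stage. By the tensor identity above, the overall output is then $\sigma_{A\tilde{B}} = (id_A \otimes \mathcal{N}_1)(\tau_{A\tilde{B}})$.

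Next I would invoke the NCEB property of $\mathcal{N}_1$. Since $\mathcal{N}_1 \in NCEB^{(d)}$, by definition it sends every element of $D(\mathcal{H}_{A\tilde{B}})$ to a state with non-negative conditional entropy across the bipartition; specializing this to the particular input $\tau_{A\tilde{B}}$ yields $S(A|\tilde{B})_\sigma \ge 0$. Because $\rho_{AB}$ was arbitrary, this establishes $(id_A \otimes (\mathcal{N}_1 \circ \mathcal{N}_2))(D(\mathcal{H}_{AB})) \subset \mathcal{S}_{CVENN}(\mathcal{H}_{A\tilde{B}})$, i.e.\ $\mathcal{N}_1 \circ \mathcal{N}_2 \in NCEB^{(d)}$.

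I do not expect a genuine obstacle here: the conclusion is a direct consequence of the \emph{for all inputs} quantifier built into the definition, and the only care needed is bookkeeping of which channel is applied first together with matching the $d$-dimensional spaces so that the composition is well defined. In fact the same reasoning shows that the hypothesis $\mathcal{N}_2 \in NCEB^{(d)}$ is not actually required: membership of the outer (last-applied) channel $\mathcal{N}_1$ alone forces the composite to be NCEB, in analogy with the absorptive behaviour of entanglement-breaking channels under composition. I would prove the theorem as stated but note this strengthening as a remark.
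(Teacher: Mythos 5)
Your proof is correct and follows essentially the same route as the paper's: factor $id_A \otimes (\mathcal{N}_1 \circ \mathcal{N}_2)$ into the two tensored stages, observe the intermediate output is a valid density operator, and invoke the NCEB property of the last-applied channel. Your closing remark is also right --- and in fact your version is slightly cleaner than the paper's, which phrases the final step as if the inner channel's range condition were doing the work, whereas (as you note) only the outer channel's membership in $NCEB^{(d)}$ is needed.
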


\begin{proof}
    Let $\mathcal{N}_1$ and $\mathcal{N}_2$ be quantum channels belonging to $NCEB^{(d)}$. The action of the serial combination of these channels on an input state $\rho$ is given by $\mathcal{N}_1(\mathcal{N}_2(\rho))$. Thus, 
    \begin{equation}
    \begin{split}
        (id_A \otimes \mathcal{N}_1)(D(\mathcal{H}_{AB})) \subset \mathcal{S}_{CVENN}(\mathcal{H}_{AB}), \\
        (id_A \otimes \mathcal{N}_2)(D(\mathcal{H}_{AB})) \subset \mathcal{S}_{CVENN}(\mathcal{H}_{AB}).
    \end{split}
    \end{equation}
    It follows that, 
    \begin{equation}
    \begin{split}
        (id_A \otimes \mathcal{N}_1 \circ \mathcal{N}_2)(D(\mathcal{H}_{AB})) & = \\
        (id_A \otimes \mathcal{N}_1) \circ (id_A \otimes \mathcal{N}_2)(D(\mathcal{H}_{AB}))
    \end{split}
    \end{equation} and since the range of $(id_A \otimes \mathcal{N}_2)$ is within $\mathcal{S}_{CVENN}(\mathcal{H}_{AB})$, the range of $(id_A \otimes \mathcal{N}_2 \circ \mathcal{N}_1)$ must also be within $\mathcal{S}_{CVENN}(\mathcal{H}_{AB})$ (follows from the definition of NCEB). Thus, $\mathcal{N}_2 \circ \mathcal{N}_1 \in NCEB^{(d)}$
\end{proof}

\begin{theorem}
Let  $\mathcal{N}_1, \mathcal{N}_2 \in NCEA^{(d)}$ , then  $\mathcal{N}_1 \circ \mathcal{N}_2 \in NCEA^{(d)}$.   
\end{theorem}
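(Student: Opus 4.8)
The plan is to mirror the set-containment argument used for the NCEB composition, though the NCEA case is in fact more direct because membership depends only on the channel's output and involves no reference system. First I would recall that a channel $\mathcal{N} \in NCEA^{(d)}$ precisely when $\mathcal{N}(D(\mathcal{H}_B)) \subset \mathcal{S}_{CVENN}(\mathcal{H}_{\tilde{B}})$, i.e. every density operator is sent to a state whose conditional entropy across the fixed partition $\tilde{B}_1 | \tilde{B}_2$ is non-negative.

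Next, fix an arbitrary input $\rho_B \in D(\mathcal{H}_B)$ and trace it through the series, recalling the convention $(\mathcal{N}_1 \circ \mathcal{N}_2)(\rho_B) = \mathcal{N}_1(\mathcal{N}_2(\rho_B))$. Since $\mathcal{N}_2$ is trace-preserving and completely positive, the intermediate state $\mathcal{N}_2(\rho_B)$ is again a legitimate density operator in $D(\mathcal{H}_{\tilde{B}})$, which (identifying the two isomorphic $d$-dimensional spaces) is a valid input for $\mathcal{N}_1$. The decisive observation is that $\mathcal{N}_1 \in NCEA^{(d)}$ forces $\mathcal{N}_1(\tau) \in \mathcal{S}_{CVENN}(\mathcal{H}_{\tilde{B}})$ for \emph{every} density operator $\tau$, and in particular for $\tau = \mathcal{N}_2(\rho_B)$. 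Hence $(\mathcal{N}_1 \circ \mathcal{N}_2)(\rho_B) \in \mathcal{S}_{CVENN}(\mathcal{H}_{\tilde{B}})$, and since $\rho_B$ was arbitrary, $\mathcal{N}_1 \circ \mathcal{N}_2 \in NCEA^{(d)}$.

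I do not expect a genuine obstacle here: the result follows from the fact that the NCEA property is a constraint on the output alone, so pre-composing the outer NCEA channel with any channel whatsoever cannot spoil it. The only points demanding care are bookkeeping rather than substance, namely verifying that the intermediate output lands in the correct $d$-dimensional space so that $\mathcal{N}_1$ applies, and that both channels are taken with respect to the same fixed bipartition $\tilde{B}_1 | \tilde{B}_2$ so that the CVENN condition refers to a consistent partition throughout. Unlike the NCEB proof, no tensoring with $id_A$ and no reasoning about how the composition interacts with a reference system is required, which is precisely what makes this argument shorter.
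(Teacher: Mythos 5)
Your proof is correct and follows essentially the same set-containment argument as the paper: the intermediate output $\mathcal{N}_2(\rho_B)$ is a valid density operator, so the NCEA property of $\mathcal{N}_1$ forces the final output into $\mathcal{S}_{CVENN}(\mathcal{H}_{\tilde{B}})$. Your observation that only the \emph{outer} channel actually needs to be NCEA is a slight sharpening the paper does not make explicit, but it does not change the route.
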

\begin{proof}
    Let $\mathcal{N}_1$ and $\mathcal{N}_2$ be quantum channels belonging to $NCEA^{(d)}$ and consider the serial combination $\mathcal{N}_1 \circ \mathcal{N}_2$  The action of the this combination on input state $\rho$ is expressed as $\mathcal{N}_1(\mathcal{N}_2(\rho))$. From this, we have, $\mathcal{N}_1(D(\mathcal{H}_{B}) \subset \mathcal{S}_{CVENN}(\mathcal{H}_{B})$ and $\mathcal{N}_2(D(\mathcal{H}_{B}) \subset \mathcal{S}_{CVENN}(\mathcal{H}_{B})$ which follows from the definition of NCEA channels. It follows that, $(\mathcal{N}_1 \circ \mathcal{N}_2)(D(\mathcal{H}_{B})) \subset \mathcal{S}_{CVENN}(\mathcal{H}_{B})$ and thus, $\mathcal{N}_1 \circ \mathcal{N}_2 \in NCEA^{(d)}$.
\end{proof}

\noindent \textbf{II. NCEB channels in parallel:} We consider the parallel combination of NCEB channels and show that the resultant channel is not necessarily NCEB. The argument is built upon the connections between NCEB channels, zero quantum capacity channels (refer section \ref{sec:relations_ZC}), and the coherent information breaking channels (section \ref{sec:relations_CIB}). This equivalence allows us to extend properties and results related to zero coherent information channels to the set of NCEB channels.

This claim can be supported using superactivation of quantum capacity, particularly for channels with zero quantum capacity. Some well-known examples of zero capacity channels include symmetric quantum channels \cite{bennett1997} and entanglement binding channels \cite{horodecki1996, horodecki1997}(also known as Horodecki channels). Though symmetric channels display a correlation between input and output, they are useless in sending quantum information, as a non-zero transfer would violate the no-cloning theorem. On the other hand, Horodecki channels can only produce weakly entangled states satisfying the PPT criterion, resulting in their inability to transmit information. 

For channels $\mathcal{N}_1 \in N_H, \mathcal{N}_2 \in A_S$, where $N_H$ and $A_S$ are the set of Horodecki channels and symmetric channels, we have $\mathscr{Q}(\mathcal{N}_1)= 0 =\mathscr{Q}(\mathcal{N}_2)$. It then follows that $Q(\mathcal{N}_1) = 0 = Q(\mathcal{N}_2)$ as $\mathscr{Q}(\mathcal{N}) \ge Q(\mathcal{N}) \ge 0$ for any quantum channel $\mathcal{N}$. It is known that  $Q(\mathcal{N}_1 \otimes \mathcal{N}_2) > 0$ \cite{smith2008}, implying that $\mathcal{N}_1 \otimes \mathcal{N}_2$ is not an NCEB channel. Consequently, from both lines of argument, it is evident that the parallel combination of NCEB channels is not guaranteed to be NCEB

Additionally, we can consider the example of a qubit depolarizing channel $\mathcal{N}$ with fidelity parameter $f$. It is known that the channel cannot transmit information for $f < 0.81$ \cite{divincenzo_1998}. In other words, the coherent information $Q(\mathcal{N}) = 0$ for $f < 0.81$. This implies that the channel is NCEB for $0 \le f < 0.81$. Let  $\mathcal{N}^{\otimes 5}$ represent a parallel combination of five qubit depolarizing channels. Based on the above arguments, it is expected that $\mathcal{N}^{\otimes 5}$ will be an NCEB channel for $f < 0.81$. However, there exists a certain additive quantum coding scheme, which can lower the transmission threshold value for the $\mathcal{N}^{\otimes 5}$. In this case, the data encoding scheme resulted in information transmission for $f > 0.809$. This example suggests the existence of coding schemes for which a parallel combination of NCEB channels might produce positive coherent information, even if the individual channels cannot do so.

\noindent \textbf{III. Complementary of NCEB channel:} We consider an isometric extension of a negative conditional entropy breaking channel $\mathcal{N}_{B \rightarrow \tilde{B}}$ (refer figure \ref{fig:isomteric_ext}). Let $U^{\mathcal{N}}_{B \rightarrow \tilde{B}E}$ the isometric extension of $\mathcal{N}_{B \rightarrow \tilde{B}}$ and treat $E$ as environment. The complementary channel $\mathcal{N}^{C}_{B \rightarrow E}$ is a quantum channel from $B$ to $E$ given by
\begin{equation}
\mathcal{N}^{C}_{B \rightarrow E}(\rho)=Tr_{\tilde{B}}\{U^{\mathcal{N}}_{B \rightarrow \tilde{B}E}(\rho)\},    
\end{equation}
for any input quantum state $\rho \in D(\mathcal{H}_B)$. We establish some aspects of the complementary channel below.
\begin{figure}[htbp]
    \fbox{\includegraphics[width=8.5cm]{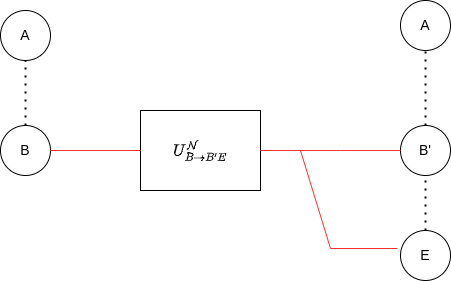}}
    \caption{Isometric Extension of an NCEB channel with $AB'$ as the output system and $E$ the environment. Under the action of an NCEB channel, any negative conditional entropy between $AB$ is transferred to $AE$ at the output end, implying an information leak due to the action of the channel.}
    \label{fig:isomteric_ext}
\end{figure}


 \begin{theorem}
 \label{thm:compl_cond_ent}
 For any channel $\mathcal{N}_{B \rightarrow \tilde{B}} \in NCEB^{(d)}$ acting on an state $\rho_{AB}$, the conditional entropy between systems $A$ and $E$ is always non-positive i.e $S(A|E)_\psi \le 0$ where $|\psi\rangle_{A\tilde{B}E} $ is a purification of the output system.
\end{theorem}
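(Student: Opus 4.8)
The plan is to exploit the purity of the global state $|\psi\rangle_{A\tilde{B}E}$ together with the defining property of an NCEB channel. Let me start with the input state $\rho_{AB}$ and its purification, so that after passing $B$ through the isometric extension $U^{\mathcal{N}}_{B\to\tilde{B}E}$, the output $|\psi\rangle_{A\tilde{B}E}$ is pure. The key structural fact I intend to use is that for any pure tripartite state, the marginal entropies satisfy $S(A\tilde{B})_\psi = S(E)_\psi$ and $S(A E)_\psi = S(\tilde{B})_\psi$, since the reduced states on complementary subsystems of a pure state share the same nonzero spectrum.

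\textbf{Translating the NCEB hypothesis into entropies.}
First I would write out what $\mathcal{N}_{B\to\tilde{B}}\in NCEB^{(d)}$ gives us: the output $\sigma_{A\tilde{B}} = (id_A\otimes\mathcal{N}_{B\to\tilde{B}})(\rho_{AB})$ has $S(A|\tilde{B})_\sigma = S(A\tilde{B})_\sigma - S(\tilde{B})_\sigma \ge 0$. Since $\sigma_{A\tilde{B}}$ is just the reduced state of $|\psi\rangle_{A\tilde{B}E}$, I can identify $S(A\tilde{B})_\sigma = S(A\tilde{B})_\psi$ and $S(\tilde{B})_\sigma = S(\tilde{B})_\psi$. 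Next I would compute the target quantity $S(A|E)_\psi = S(AE)_\psi - S(E)_\psi$ and apply the pure-state identities: $S(AE)_\psi = S(\tilde{B})_\psi$ and $S(E)_\psi = S(A\tilde{B})_\psi$. Substituting these yields
\begin{equation}
S(A|E)_\psi = S(\tilde{B})_\psi - S(A\tilde{B})_\psi = -\,S(A|\tilde{B})_\psi .
\end{equation}

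\textbf{Concluding.}
The final step is immediate: since the NCEB property forces $S(A|\tilde{B})_\psi \ge 0$, the displayed identity gives $S(A|E)_\psi = -S(A|\tilde{B})_\psi \le 0$, which is exactly the claim. The interpretation is that the negativity of conditional entropy is not destroyed but rather transferred from the $A\tilde{B}$ bipartition to the $AE$ bipartition, consistent with the information-leaking picture in Figure \ref{fig:isomteric_ext}.

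\textbf{Anticipated obstacle.}
I expect the only delicate point to be justifying the pure-state marginal identities rigorously, in particular ensuring that $|\psi\rangle_{A\tilde{B}E}$ genuinely is pure so that $S(XY)_\psi = S(Z)_\psi$ holds for every bipartition $\{XY\,|\,Z\}$ of the three systems. This requires that the input $\rho_{AB}$ be taken (without loss of generality) as pure, or that one purifies $\rho_{AB}$ first and then tracks the extra reference system; because conditional entropy is invariant under appending an uncorrelated reference and under the isometry on $B$, restricting to pure inputs costs nothing. Once purity is secured, the argument is a short entropic identity rather than a computation, so no heavy estimates are needed.
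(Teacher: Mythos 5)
Your proof is correct and follows essentially the same route as the paper: purify the output state $\sigma_{A\tilde{B}}$, use the complementary-marginal identities $S(A\tilde{B})_\psi = S(E)_\psi$ and $S(\tilde{B})_\psi = S(AE)_\psi$ to obtain $S(A|\tilde{B})_\sigma = -S(A|E)_\psi$, and conclude from the NCEB hypothesis. Your anticipated obstacle is moot, since the theorem is stated for a purification $|\psi\rangle_{A\tilde{B}E}$ of the output, which is pure by definition, so no reduction to pure inputs is needed.
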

\begin{proof}
    The action of an NCEB channel $\mathcal{N}_{B \rightarrow \tilde{B}}$ on the state  $\rho_{AB}$ can be written as
    \begin{equation}
    (id_A \otimes \mathcal{N}_{B \rightarrow \tilde{B}})(\rho_{AB}) = \sigma_{A\tilde{B}}.  
    \end{equation}
    Thus, we have $S(A|\tilde{B})(\sigma) \ge 0$. Consider a purification $|\psi\rangle_{A\tilde{B}E}$ for $\sigma_{A\tilde{B}}$. It follows that
    \begin{equation}
    \begin{split}
        S(A|\tilde{B})_\sigma & = S(A\tilde{B})_\sigma -S(\tilde{B})_\sigma \\
        & = S(A\tilde{B})_\psi - S(\tilde{B})_\psi \\
        & = S(E)_\psi- S(AE)_\psi \\
        &  = - S(A|E)_\psi.
    \end{split}
    \end{equation}
    If $S(A|\tilde{B})_\sigma \ge 0$, which holds true under the action of an NCEB channel, then $S(A|E)_\psi \le 0$.    
\end{proof} Now, if we consider $E$ as an adversary, Eve, we can prove that Eve can leak out information about the input system $AB$ under the action of these channels. In other words, the mutual information $I(A; E)$ about the system $A$ and the Eve will always be greater than the output mutual information $I(A;\tilde{B})$. 

\begin{theorem}
\label{thm:compl_mut_inf}
 $I(A;\tilde{B}) \le I(A;E)$  
\end{theorem}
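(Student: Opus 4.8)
The plan is to prove the mutual information inequality $I(A;\tilde{B}) \le I(A;E)$ by reducing it directly to Theorem~\ref{thm:compl_cond_ent}, which has already established that $S(A|E)_\psi \le 0$ whenever $\mathcal{N}_{B\rightarrow\tilde{B}}$ is NCEB. First I would expand both mutual informations using the definition from the preliminaries, $I(A;X) = S(A) - S(A|X)$, applied to the pure tripartite state $|\psi\rangle_{A\tilde{B}E}$. Writing $I(A;\tilde{B}) = S(A)_\psi - S(A|\tilde{B})_\psi$ and $I(A;E) = S(A)_\psi - S(A|E)_\psi$, the common marginal term $S(A)_\psi$ cancels, so the claimed inequality is equivalent to $S(A|\tilde{B})_\psi \ge S(A|E)_\psi$.

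Next I would bring in the key structural fact already derived inside the proof of Theorem~\ref{thm:compl_cond_ent}, namely the purification identity $S(A|\tilde{B})_\psi = -S(A|E)_\psi$. Substituting this into the reduced inequality turns $S(A|\tilde{B})_\psi \ge S(A|E)_\psi$ into $-S(A|E)_\psi \ge S(A|E)_\psi$, i.e. $2S(A|E)_\psi \le 0$, which is exactly the statement $S(A|E)_\psi \le 0$ guaranteed by Theorem~\ref{thm:compl_cond_ent}. Thus the inequality follows immediately once the conditional-entropy sign result is in hand; the two statements are essentially algebraic restatements of one another.

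The argument is therefore short and self-contained, and I do not anticipate a serious obstacle. The only point requiring a little care is making sure the entropies are all evaluated on the \emph{same} global pure state $|\psi\rangle_{A\tilde{B}E}$, so that $S(\tilde{B})_\psi = S(AE)_\psi$ and $S(A\tilde{B})_\psi = S(E)_\psi$ hold by the equality of spectra of complementary reductions of a pure state. Once that is fixed, the marginal $S(A)_\psi$ is unambiguous and cancels cleanly. I would present the proof as: expand both sides, cancel $S(A)_\psi$, invoke the identity $S(A|\tilde{B})_\psi = -S(A|E)_\psi$, and conclude via $S(A|E)_\psi \le 0$ from the previous theorem. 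This keeps the proof to just a few lines and leans entirely on results already available in the excerpt.
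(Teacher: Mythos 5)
Your proposal is correct and follows essentially the same route as the paper: both expand the mutual informations so that $S(A)_\psi$ cancels, reduce the claim to $S(A|E)_\psi - S(A|\tilde B)_\psi \le 0$, and conclude from the sign results of the preceding theorem. Your extra use of the purification identity $S(A|\tilde B)_\psi = -S(A|E)_\psi$ is only a cosmetic repackaging of the same facts the paper invokes.
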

\begin{proof} We know that under the action of a channel $\mathcal{N}_{B \rightarrow \tilde{B}} \in NCEB^{(d)}$, the output conditional entropy to be as $S(A|\tilde{B})\sigma \ge 0$. For such channels, the complementary channel $\mathcal{N}^C_{B \rightarrow E}$ produces conditional entropy  $S(A|E)_\psi \le 0$. If we consider the difference
\begin{eqnarray}
&&I(A; \tilde{B})_\psi - I(A; E)_\psi {}\nonumber\\&&
= S(A|E)_\psi -  S(A|\tilde{B})_\psi \le 0 {}\nonumber\\&&
\implies I(A; \tilde{B})_\psi \le I(A; E)_\psi,
\end{eqnarray}
where inequality follows from the fact that $S(A|E)_\psi \le 0$ while $S(A|\tilde{B})_\psi \ge 0$.
\end{proof} The above results reveal a fascinating aspect of NCEB channels. The complementary channel of NCEB  can be interpreted not only as \textit{information leaking} channel but also as \textit{hacking} channel from Eve's point of view. A channel is noiseless if the environment gains any information about the state being transferred. In other words, the channel does not leak information.  When NCEB channels act on one part, say $B$, of a quantum system $AB$, the mutual information between systems $A$ and $\tilde{B}$ is less than the mutual information between $A$ and environment $E$. Thus, the channel allows the environment to have more information about $A$ than $\tilde{B}$, which can equivalently be taken as the information leaking aspect of the channel. Consider a scenario where the environment $E$ is nothing but Eve coupled with the system $AB$. With the action of an NCEB channel, Eve can have more information about $A$ than $\tilde{B}$. In this sense, Eve has hacked the information. This prompts us to term these channels as hacking or information-leaking channels. In the next section, we revisit the complement of these channels when covering examples of NCEB channels.

\subsection{Examples} \label{sec:NCEBA_Ex}
We cover depolarizing channels in the context of NCEB and NCEA channels. Negative conditional entropy is present in quantum systems as small as $2 \otimes 2$ systems. Consequently, we exemplify NCEB channels through qubit depolarizing channels. Similarly, we study global depolarizing and transpose depolarizing quantum channels in the context of NCEA channels and identify sufficiency conditions for which they become NCEA.\\

\noindent \textbf{I: Example of NCEB channel:}\label{sec:NCEBA_Ex_NCEB} Depolarizing channels are a well-known type of quantum channel in quantum information. The action of this channel on a $d$-dimensional system, described by its density operator $\rho$, is expressed as
\begin{equation}
\label{eqn:qubit_dep}
    \mathcal{N}_p^d(\rho) = (1 - p) \rho + p \frac{I}{d} Tr\{\rho\},
\end{equation} where $p$ ($0 \le p \le 1$) is the mixing parameter. We obtain a qubit depolarizing channel when $d= 2$ and denote it as $\mathcal{N}^2_p$.

Negative conditional entropy is a feature of entanglement. It follows that entanglement-breaking channels will break negative conditional entropy and provide direct examples for NCEB channels. $\mathcal{N}^2_p$ becomes entanglement breaking for $\frac{2}{3} \le p \le 1$. This is determined using the sufficiency condition for entanglement breaking channels \cite{horodecki2003} and the positive partial transpose (PPT) test \cite{Horodecki1996Sep, peres1996}. However, entangled quantum states with positive conditional entropy exist. Therefore, there must be a larger range of $p$ for which $\mathcal{N}^2_p$ is an $NCEB$ channel.

To identify the range of $p$ for which the channel becomes NCEB, it suffices to evaluate the action of $\mathcal{N}^2_p$ on one part of $2 \otimes 2$ pure entangled states (equation \ref{eqn:pure_ent}). The supporting argument relies on the equivalence between NCEB channels and Coherent Information Breaking (CIB) channels (refer \ref{sec:relations_CIB}). For the qubit depolarizing channel $\mathcal{N}^2_p$, computing the coherent information of the channel can be simplified to optimization over pure entangled states because of the unitary covariant nature of the channel, i.e., $\mathcal{N}^2_p(U \rho U^\dagger) =  U \mathcal{N}^2_p(\rho) U^\dagger$ \cite{wilde_qit2016}. From a different perspective, the coherent information of a quantum channel can be connected to the hashing point of the channel. The hashing point of a quantum channel is the threshold value of fidelity (noise) parameter below (above) which the information transmission capacity of the channel becomes zero \cite{divincenzo_1998, wilde_qit2016}. Thus, the hashing point reveals the value of $p$ above which the channel becomes NCEB.

Likewise, complementary channels to NCEB display properties covered in theorems \ref{thm:compl_cond_ent} and \ref{thm:compl_mut_inf}. In this context, the channels complementary to qubit depolarizing channels have been studied \cite{Leung_2017}. It is established that such channels possess positive coherent information in all $p < 1$. Thus, for a depolarizing channel in the NCEB regime, their complementary channels will have positive coherent information and result in output state $\sigma$ such that $I(A; E)_\sigma \ge I(A;\tilde{B})_\sigma$\\

\noindent \textit{a. Non-maximally entangled state: }We examine $2 \otimes 2$ pure entangled states of form
\begin{equation}
\label{eqn:pure_ent}
     \ket{\psi} = \cos \alpha \ket{00} + \sin \alpha \ket{11},
\end{equation} 
where $\alpha \in [0, \pi]$. The action of $id \otimes \mathcal{N}^2_p$ on $\ket{\psi}$ produces the following state
\begin{equation}
    \rho_{AB} = \begin{bmatrix}
         \beta_1 & 0 & 0 & \beta_2\\
        0 & \frac{p}{2} \cos^2 \alpha & 0 & 0 \\
        0 & 0 & \frac{p}{2} \sin^2 \alpha & 0 \\
        \beta_2 & 0 & 0 & (1 - \frac{p}{2}) - \beta_1
    \end{bmatrix}.
\end{equation} with $\beta_1 = (1 - \frac{p}{2}) \cos^2 \alpha$ and $\beta_2 = \frac{(1 - p)}{2} \sin 2\alpha$. The conditional entropy $S(A|B)_\rho$ is then given by
\begin{equation}
\begin{split}
    S(A|B)_\rho = & -\lambda_1 \log \lambda_1 -\lambda_2 \log \lambda_2 -\lambda_3 \log \lambda_3 \\ 
    & - \lambda_4 \log \lambda_4 + (\lambda_5)\log(\lambda_5) \\ 
    & + (1 - \lambda_5)\log(1 - \lambda_5),
\end{split}
\end{equation} where 
\begin{equation}
\begin{split}
     \lambda_1 = \frac{p}{2} \cos^2 \alpha, & \lambda_2 = \frac{p}{2} \sin^2\alpha, \\
     \lambda_3 = \frac{1}{2} - \sigma_1 - \frac{p}{4}, & \lambda_4 = \frac{1}{2} + \sigma_1 - \frac{p}{4}, \\
     \text{and } \lambda_5 = cos^2\alpha - \frac{p}{2} \cos 2\alpha, \\
\end{split}
\end{equation} with $\sigma_1 = \frac{1}{8} \sqrt{5 p^2 - 12 p + 8 + 4 p \cos 4 \alpha - 3p^2 \cos 4 \alpha}$.\\

Numerical simulations, illustrated in figure \ref{fig:Qubit_depolarizing_entangled_state}, indicate that maximally entangled states maximize the conditional entropy for a given $p$. This suggests that checking the output on maximally entangled states is sufficient to determine if $\mathcal{N}^2_p$ is NCEB. This observation is supported by Theorem 1 in the related work(\cite{muhuri2023}), which covers a sufficiency condition for group covariant channels to be NCEB. Figure \ref{fig:Qubit_depolarizing_entangled_state} depicts that $\mathcal{N}^2_p$ becomes NCEB for $p > 0.2$. It can be inferred that $\mathcal{N}^2_p$ where $0.2 < p < 0.66 (\frac{2}{3})$ are examples of NCEB channels that are not EB.\\
\begin{figure}[htbp]
    \hspace{-1cm} \includegraphics[width=1.1\columnwidth]{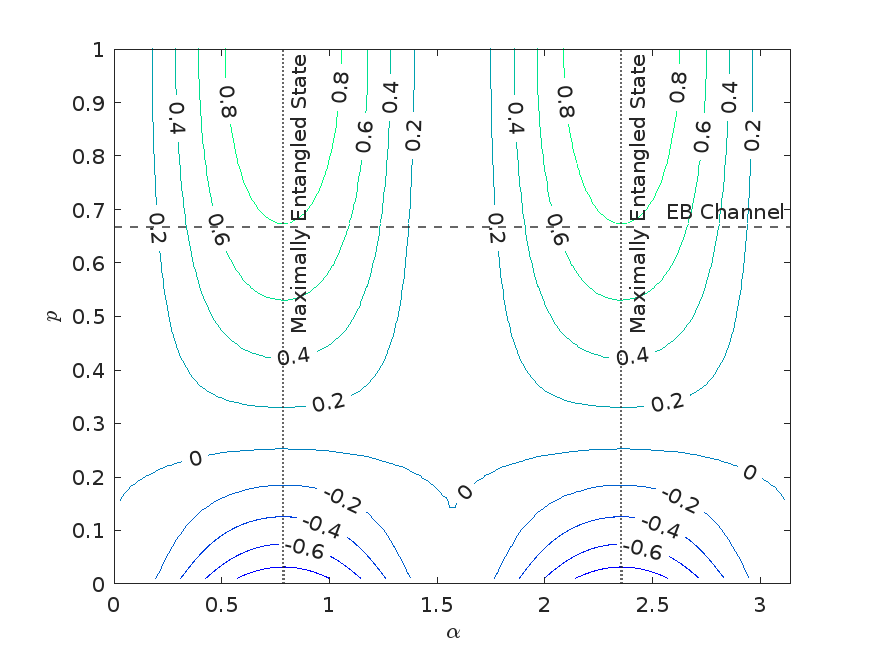}
    \caption{Conditional entropy of 2-qubit pure entangled states after the action of $id \otimes \mathcal{N}^2_p$ as a contour plot. The state parameter $\alpha$ is mapped to the x-axis and the channel parameter $p$ to the y-axis, with curves representing the level sets of the conditional entropy. The horizontal dashed line represents the minimum $p$ value ($p = \frac{2}{3}$) for which the $\mathcal{N}^2_p$ becomes entanglement breaking. More generally, the plot illustrates that maximally entangled states ($\alpha = \frac{\pi}{4} (0.785)$ or $\alpha = \frac{3 \pi}{4} (2.356)$) demand higher $p (> 0.2)$ values to break negative conditional entropy compared to other pure entangled states.}
  \label{fig:Qubit_depolarizing_entangled_state}
\end{figure}

\noindent \textit{b. Bell-diagonal state: }We are interested in the action of qubit depolarizing channels on general $2 \otimes 2$ quantum states. Consider the set of all \textit{Bell diagonal} states, which is given by
\begin{equation}
    \rho_{AB} = \sum_{m, n = 0}^1 p_{mn} |\gamma_{mn}\rangle\langle \gamma_{mn}|,
\end{equation} 
where $|\gamma_{mn}\rangle = \frac{1}{\sqrt{2}}(|0, n\rangle + (-1)^m|1, 1\oplus n\rangle)$ and $\sum_{m, n =0}^1 p_{mn} = 1, p_{mn} \ge 0$. An equivalent characterization of Bell diagonal states in terms of Bloch vectors and correlation matrix is given by
\begin{equation}
    \rho_{AB} = \frac{1}{4} [I_4 + \sum_{i, j = 1}^3 t^b_{ij} \sigma_i \otimes \sigma_j],
\end{equation} where $T^b = [t^b_{ij}]$ with $t^b_{ii} = c_i, -1 \le c_i \le 1$ and $t^b_{ij} = 0$ for $i \ne j$. The probabilities $p_{mn}$ are related to the correlation matrix via the following relation:
\begin{equation}
p_{mn} = \frac{1}{4}(1 + (-1)^m c_1 -(-1)^{m+n}c_2 + (-1)^n c_3.
\end{equation} 
The plots in figure \ref{fig:NCEB_BD} depict the conditional entropy of Bell diagonal states before and after the action of $id \otimes \mathcal{N}^2_p$ for $p = \frac{1}{2}$. It is observed that the states possessing negative conditional entropy are concentrated in the corners of the Bell diagonal tetrahedron (see figure \ref{fig:NCEB_BD} a). After the action of the channel $id \otimes \mathcal{N}^2_p$, the conditional entropy of the output states, especially those corresponding to the corners of the tetrahedron, turns non-negative (figure \ref{fig:NCEB_BD} b) This corroborates with the NCEB nature of $id \otimes \mathcal{N}^2_p$ when $p = \frac{1}{2}$\\
\begin{figure}[hbp]
  \centering
  \includegraphics[width=\columnwidth]{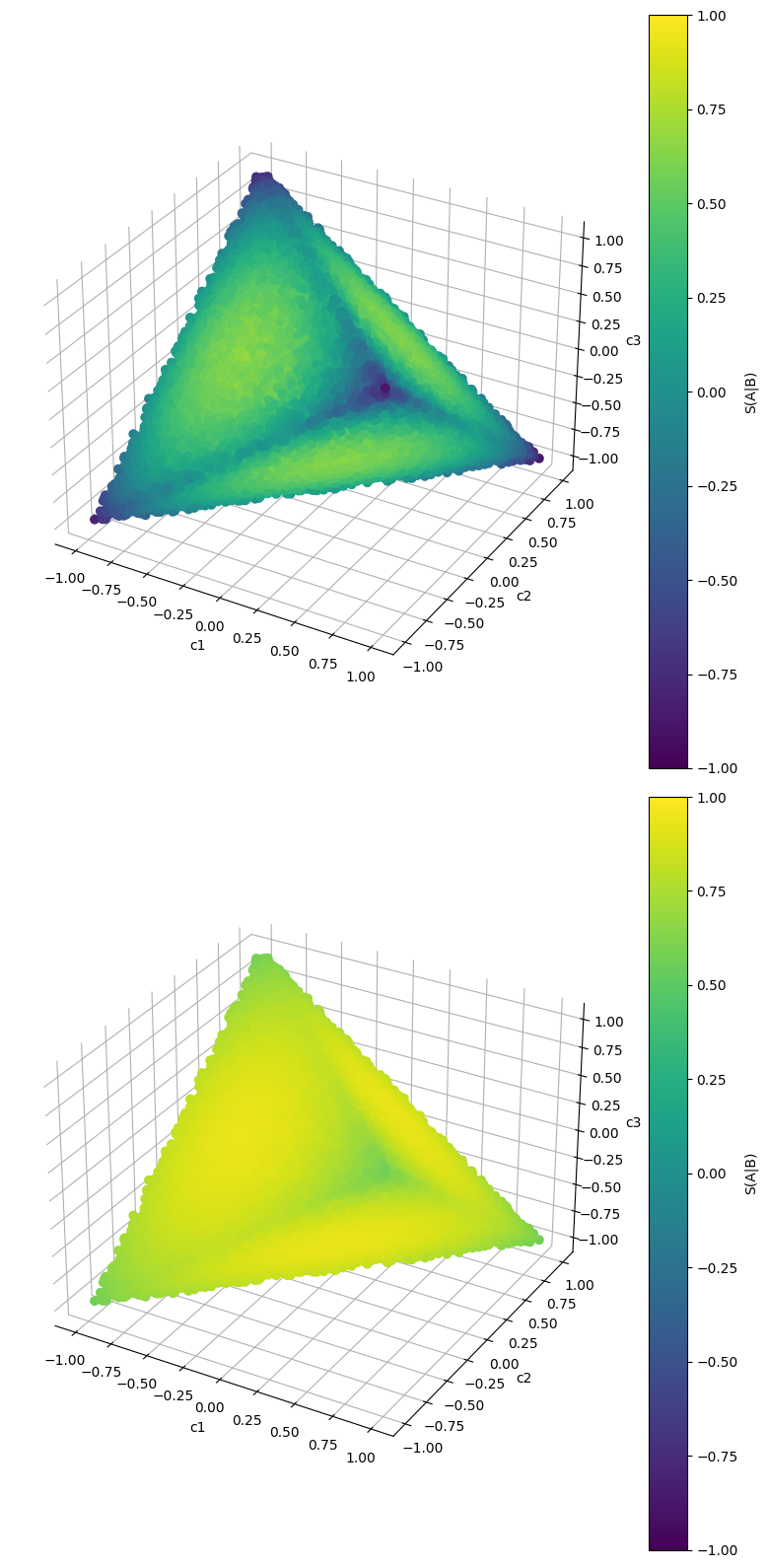}
\caption{Conditional entropy of Bell diagonal states under the action of the channel $id \otimes \mathcal{N}^2_p$ for $p = \frac{1}{2}$. (a) Initial conditional entropy distribution, showing states with negative conditional entropy concentrated in the corners of the Bell diagonal tetrahedron. (b) Conditional entropy after the action of the channel, illustrating the shift to non-negative values, especially for states initially located at the tetrahedron corners.}
 \label{fig:NCEB_BD}
\end{figure}

\noindent \textbf{II. Example of NCEA channel:}\label{sec:NCEBA_Ex_NCEA} This section presents two quantum channels in the context of NCEA channels and determines the conditions under which they qualify as annihilating channels. Annihilation is made possible by applying either a global channel on the subsystem or local channels acting on the subsystem. We exhibit the action through the use of global channels.\\ 

\noindent \textit{a. Global depolarizing channel:} A global depolarizing channel acting on a $d^2$-dimensional system $B$ (or alternately, a $d\otimes d$ system $B_1 B_2$) with density operator $\rho \in D(\mathcal{H}_B)$, is described by
\begin{equation}
    \mathcal{E}_{gd}(\rho) = p \rho + (1-p)\frac{I_{d^2}}{d^2}.
    \label{eqn:global_dep_map}
\end{equation} 
Consider $|\phi^+\rangle$ to be a $d \otimes d$ maximally entangled state where $|\phi^+\rangle = \frac{1}{\sqrt{d}} \sum_{i = 1}^d |i\rangle |i\rangle$. The action of the global depolarizing channel on this state is given by
\begin{equation}
    \chi = \mathcal{E}_{gd}(|\phi^+\rangle\langle\phi^+|) = p |\phi^+\rangle\langle\phi^+|  + (1-p)\frac{I_{d^2}}{d^2}.
    \label{eqn:global_dep_map_choi}
\end{equation} The marginal density operators $\chi_{B_1}, \chi_{B_2}$ of $\chi$ simplify to $\frac{I_d}{d}$. 

We establish a sufficiency condition for the depolarizing channel to be NCEA using the maximally entangled state. The action of the depolarizing channel on $|\phi^+\rangle$ produces an isotropic state, with $d^2$ eigenvalues: $\lambda_1 = \frac{1 + (d^2-1)p}{d^2}$ and $\lambda_2 = \dots = \lambda_{d^2} = \frac{1-p}{d^2}$. Thus, the Von Neumann entropy of $\chi$ is
\begin{equation}
\begin{split}
    S(\chi) & = -\frac{1 + (d^2-1)p}{d^2}\log(\frac{1 + (d^2-1)p}{d^2}) \\ & -(d^2 - 1)\frac{1-p}{d^2}\log(\frac{1-p}{d^2}) = S(p,d).
\end{split}
\end{equation}
Finally, conditional entropy of the output system $B$ across a fixed partition $B_1 - B_2$ is
\begin{equation}
    S(B_1|B_2)_\chi = S(p, d) - \log d.
\end{equation}

\begin{proposition}
    \label{prop:max_ent_gdp}
    Let $\Omega$ represent a maximally entangled state in a $d \otimes d$ system except $|\phi^+\rangle$, and let $\chi'$ be its output state under the action of the global depolarizing channel. Then $S(B_1|B_2)_\chi = S(B_1|B_2)_{\chi'}$.
\end{proposition}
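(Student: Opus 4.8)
The plan is to exploit the full unitary covariance of the global depolarizing channel together with the fact that every maximally entangled state in $d \otimes d$ is a local-unitary image of $|\phi^+\rangle$. First I would invoke the standard structure result: any maximally entangled $\Omega$ can be written as $\Omega = (U \otimes V)|\phi^+\rangle\langle\phi^+|(U \otimes V)^\dagger$ for suitable local unitaries $U$ on $B_1$ and $V$ on $B_2$. Setting $W = U \otimes V$, this is a global unitary that happens to factor across the bipartition $B_1 B_2$.

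Next I would establish the covariance of $\mathcal{E}_{gd}$ under \emph{arbitrary} global unitaries. Since the map acts as $\rho \mapsto p\rho + (1-p)I_{d^2}/d^2$ and the maximally mixed term is fixed by conjugation, for any unitary $W$ one has $\mathcal{E}_{gd}(W\rho W^\dagger) = pW\rho W^\dagger + (1-p)I_{d^2}/d^2 = W\,\mathcal{E}_{gd}(\rho)\,W^\dagger$. Applying this with the local $W = U \otimes V$ gives $\chi' = \mathcal{E}_{gd}(\Omega) = W\chi W^\dagger$, so the output associated with $\Omega$ is merely a local-unitary rotation of the output $\chi$ associated with $|\phi^+\rangle$.

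Finally I would evaluate the two terms of $S(B_1|B_2) = S(B_1 B_2) - S(B_2)$ separately. The total von Neumann entropy is invariant under unitary conjugation, so $S(\chi') = S(W\chi W^\dagger) = S(\chi)$. For the marginal I would use that the partial trace over $B_1$ is unaffected by a unitary acting only on $B_1$, whence $\chi'_{B_2} = \mathrm{Tr}_{B_1}[(U\otimes V)\chi(U\otimes V)^\dagger] = V\chi_{B_2}V^\dagger$; since the excerpt already records $\chi_{B_2} = I_d/d$, which is fixed by $V$, this is immediate, but even without that fact unitary invariance gives $S(\chi'_{B_2}) = S(\chi_{B_2})$. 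Subtracting yields $S(B_1|B_2)_{\chi'} = S(\chi') - S(\chi'_{B_2}) = S(\chi) - S(\chi_{B_2}) = S(B_1|B_2)_\chi$, which is the claim.

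The only mildly delicate point is the marginal step, namely verifying that tracing out $B_1$ commutes with conjugation by a $B_1$-local unitary. This follows from a one-line basis-change argument, reindexing the trace $\sum_n(\langle n|U \otimes I)\,\chi\,(U^\dagger|n\rangle \otimes I)$ by the orthonormal basis $|m\rangle = U^\dagger|n\rangle$. Everything else reduces to unitary invariance of entropy and the covariance identity, so no optimization or inequality is required; the content of the proposition is essentially that the conditional entropy is a local-unitary invariant and that $\mathcal{E}_{gd}$ intertwines with all such unitaries.
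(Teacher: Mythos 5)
Your proof is correct and follows essentially the same route as the paper, which simply notes that $\chi'$ and $\chi$ share the same spectrum and the same marginal $I_d/d$; you supply the underlying mechanism (every maximally entangled state is a local-unitary image of $|\phi^+\rangle$, and $\mathcal{E}_{gd}$ is unitarily covariant), from which both of those facts follow. No gaps.
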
 
The equality in conditional entropy follows from the fact that $\chi'$ has the same spectra as $\chi$. Also, both $\chi$ and $\chi'$ have the same marginal density operator, $\frac{I_d}{d}$. 

\begin{proposition}
    \label{prop:pure_state_gdp}
    Let $\Pi$ be a pure state in the Hilbert space of the $B$, and $\chi_\pi$ be the corresponding output state under the channel. It follows that, $S(\chi) = S(\chi_\pi)$
\end{proposition}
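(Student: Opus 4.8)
The plan is to show that $\chi_\pi$ and $\chi$ share the same spectrum, after which equality of the von Neumann entropies is immediate, since $S(\cdot)$ is a function of the eigenvalues alone. The central observation is that $|\phi^+\rangle$, though maximally entangled across the bipartition $B_1 - B_2$, is nonetheless a \emph{pure} state of the full $d^2$-dimensional system $B$; consequently both $\chi$ and $\chi_\pi$ arise from feeding a rank-one projector into the same global depolarizing channel, and all such outputs must look alike spectrally.

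First I would write the output explicitly as $\chi_\pi = \mathcal{E}_{gd}(\Pi) = p\,\Pi + (1-p)\frac{I_{d^2}}{d^2}$, where $\Pi = |\pi\rangle\langle\pi|$ for some unit vector $|\pi\rangle \in \mathcal{H}_B$. Since $\Pi$ is a rank-one projector, it is diagonal in any orthonormal basis whose first element is $|\pi\rangle$, with eigenvalue $1$ along $|\pi\rangle$ and $0$ on its orthogonal complement. The identity $I_{d^2}$ is diagonal in the same basis, so $\chi_\pi$ is already diagonalized by this basis. Reading off the eigenvalues gives $p + (1-p)\frac{1}{d^2} = \frac{1 + (d^2-1)p}{d^2}$ along $|\pi\rangle$, and $(1-p)\frac{1}{d^2} = \frac{1-p}{d^2}$ with multiplicity $d^2 - 1$ on the orthogonal complement.

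These are exactly the eigenvalues $\lambda_1 = \frac{1 + (d^2-1)p}{d^2}$ and $\lambda_2 = \dots = \lambda_{d^2} = \frac{1-p}{d^2}$ already recorded for $\chi = \mathcal{E}_{gd}(|\phi^+\rangle\langle\phi^+|)$ in the discussion preceding the proposition. Hence $\chi_\pi$ and $\chi$ have identical spectra, and since the von Neumann entropy $S(\rho) = -\sum_k \mu_k \log \mu_k$ depends only on the multiset of eigenvalues $\{\mu_k\}$, I conclude $S(\chi_\pi) = S(\chi)$.

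There is essentially no obstacle here: once one notes that $|\phi^+\rangle$ is itself a pure state of $B$, the proposition reduces to the elementary fact that the depolarizing channel sends every pure input to an output with a fixed spectrum. The only point warranting care is to treat $|\pi\rangle$ as an arbitrary unit vector and verify that the eigenvalue computation is independent of the choice of $|\pi\rangle$, which it manifestly is, since the argument never uses any structure of $|\pi\rangle$ beyond its being a normalized vector.
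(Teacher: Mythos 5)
Your proof is correct and follows the same route as the paper, which justifies the proposition in one line by noting that $\chi$ and $\chi_\pi$ share the same eigenvalues; you simply make the spectral computation explicit. No issues.
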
 
\noindent The equality is due to $\chi$ and $\chi_\pi$ having the same eigenvalues. However, the marginals density operators of $\chi$ and $\chi_\pi$ are not equal with $S(\chi)_{B_2} \ge S(\chi_\pi)_{B_2}$. This gives rise to the inequality
\begin{equation}
    S(B_1|B_2)_\chi \le S(B_1|B_2)_{\chi_\pi}
\end{equation}
\noindent Combining propositions \ref{prop:max_ent_gdp} and \ref{prop:pure_state_gdp}, we arrive at our main statement
\begin{lemma}
\label{lemma:glob_dep_cond}
    If a global depolarizing channel annihilates the conditional entropy of a maximally entangled state, it does so for all pure input states. In other words, if $S(B_1|B_2)_\chi = S(\chi) - S(\chi)_{B_2} \ge 0$, then $S(B_1|B_2)_{\chi_\pi} = S(\chi_\pi) - S(\chi_\pi)_{B_2} \ge 0$
    \label{lemma:global_dep}
\end{lemma}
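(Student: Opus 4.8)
The plan is to recognize that the maximally entangled state is the extremal (worst) case among all pure inputs to the global depolarizing channel, so that non-negativity of the conditional entropy on $|\phi^+\rangle$ forces non-negativity on every pure state. The entire argument rides on the single inequality already packaged in Proposition \ref{prop:pure_state_gdp}, namely $S(B_1|B_2)_\chi \le S(B_1|B_2)_{\chi_\pi}$, so the lemma becomes essentially a one-line deduction once that inequality is in hand.

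First I would recall why $S(\chi) = S(\chi_\pi)$. Since $\mathcal{E}_{gd}$ sends any pure state $\Pi = |\psi\rangle\langle\psi|$ to $p|\psi\rangle\langle\psi| + (1-p)I_{d^2}/d^2$, the output carries the eigenvalue $p + (1-p)/d^2$ along $|\psi\rangle$ and the $(d^2-1)$-fold degenerate eigenvalue $(1-p)/d^2$ on the orthogonal complement. This spectrum is independent of $|\psi\rangle$ and coincides with the spectrum $\{\lambda_1, \lambda_2, \dots, \lambda_{d^2}\}$ computed for $\chi$, which yields equality of the global Von Neumann entropies.

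Next I would compare the marginal entropies. The maximally entangled input produces the maximally mixed marginal $\chi_{B_2} = I_d/d$, so $S(\chi)_{B_2} = \log d$, the largest value attainable by any state on a $d$-dimensional system. Hence $S(\chi_\pi)_{B_2} \le \log d = S(\chi)_{B_2}$ for every pure $\Pi$. Combining this with the equality of total entropies gives
\begin{equation}
S(B_1|B_2)_\chi = S(\chi) - S(\chi)_{B_2} \le S(\chi_\pi) - S(\chi_\pi)_{B_2} = S(B_1|B_2)_{\chi_\pi},
\end{equation}
which is exactly the inequality of Proposition \ref{prop:pure_state_gdp}. Proposition \ref{prop:max_ent_gdp} guarantees that this reasoning does not depend on which maximally entangled state is taken as reference, so fixing $|\phi^+\rangle$ is without loss of generality.

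Finally, the conclusion is immediate: assuming the hypothesis $S(B_1|B_2)_\chi \ge 0$, the displayed chain gives $S(B_1|B_2)_{\chi_\pi} \ge S(B_1|B_2)_\chi \ge 0$ for every pure input. There is no genuine analytic obstacle here; the only conceptual content is identifying the maximally entangled state as the conditional-entropy minimizer over pure inputs, which holds because it simultaneously fixes the global spectrum and maximizes the conditioning marginal's entropy. The one point I would state carefully is that $S(\chi_\pi)_{B_2} \le \log d$ is the generic dimension bound rather than anything special to depolarizing noise, so the extremality of $|\phi^+\rangle$ is structural and not an artifact of the particular input.
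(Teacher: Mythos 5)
Your proposal is correct and follows essentially the same route as the paper: equality of the global entropies $S(\chi)=S(\chi_\pi)$ from the input-independent output spectrum (Proposition \ref{prop:pure_state_gdp}), combined with the marginal inequality $S(\chi_\pi)_{B_2}\le \log d = S(\chi)_{B_2}$, to get $S(B_1|B_2)_\chi \le S(B_1|B_2)_{\chi_\pi}$ and hence the claim. The only difference is that you spell out the dimension-bound justification for the marginal inequality, which the paper asserts without elaboration.
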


Thus, if $\mathcal{E}_{gd}$ annihilates the conditional entropy of a maximally entangled state, then it does so for all states, pure or mixed. Consequently, verifying the action of the global depolarizing channel on a pure maximally entangled state suffices to comment on whether it is NCEA or not.

To illustrate the application of Lemma \ref{lemma:global_dep}, consider a $2 \otimes 2$ global depolarizing channel based on equation \eqref{eqn:global_dep_map}. When this channel acts on a Bell state $|\phi^{+}\rangle = \frac{1}{\sqrt{2}}(|00\rangle + |11\rangle)$, the output states are separable for $0 \le p \le \frac{1}{3}$. In contrast, the range of non-negative conditional entropy for the output states is $0 \le p \le 0.748$. By Lemma \ref{lemma:global_dep}, we can conclude that the $2 \otimes 2$ global depolarizing channel is NCEA for $p \le 0.748$, a broader range than the separability condition.\\

\noindent \textit{b. Transpose depolarizing channel:} The action of a transpose depolarizing channel on a $d^2 \times d^2$ complex matrix $\mu$ is given by
\begin{equation}
    \Phi(\mu) = t \mu^T  + (1-t) \frac{I_{d^2}}{d^2} Tr\{\mu\}
\end{equation} where $t \in [\frac{-1}{d-1}, \frac{1}{d+1}]$, and $\mu^T$ represents the full transpose of the complex matrix. For a density matrix $\rho$, $\rho^T$ is a valid state, and hence the transpose depolarizing channel is a valid completely positive, trace-preserving (CPTP) map. It follows that
\begin{equation}
    \Phi(\rho) = t \rho^T + (1 -t) \frac{I_{d^2}}{d^2}. 
\end{equation} $\Pi^T$ is pure for a pure state $\Pi$. Hence, the action of the transpose depolarizing channel reduces to that of the global depolarizing channel discussed above. Therefore, the sufficiency condition in the lemma \ref{lemma:glob_dep_cond} applies to transpose depolarizing channels.\\

\textit{Note:} It should be noted that 2-local qubit depolarizing channel $\mathcal{N}^D \otimes \mathcal{N}^D$ (where $\mathcal{N}^D(\rho) = p \rho + (1-p) \frac{I}{2}$) becomes NCEA  for $p < 0.87$, through its action on $2$-qubit pure entangled states. On the other hand, the same channel $id \otimes \mathcal{N}^D$ is NCEB for a range $p < 0.809$ \cite{divincenzo_1998, wilde_qit2016}. The example is crucial to understanding the difference between NCEA and NCEB.

\section{Relating NCEB and NCEA with other channels}\label{sec:relations}
We establish connections between NCEB, NCEA,  and other quantum channels analytically and through examples. Specifically, the relation between NCEB channels and Coherent Information  Breaking (CIB) channels, Entanglement Breaking (EB) channels, and zero quantum capacity channels are covered. Additionally, we introduce Mutual Information Breaking (MIB) channels, a new class of quantum channels where mutual information vanishes in the output states and demonstrate its connection to NCEB channels. For NCEA channels, we explore their relation with NCVE channels, a set of quantum channels that are non-decreasing in quantum conditional entropy. 

\subsection{Equivalence of NCEB and CIB} \label{sec:relations_CIB}
Here, we cover a characterization of NCEB channels in terms of their coherent information. Zero coherent information channels or Coherent Information Breaking (CIB) channels are those channels $\mathcal{N}_{B\rightarrow \tilde{B}} $ for which the coherent information is zero, i.e., $Q(\mathcal{N}_{B\rightarrow \tilde{B}} ) = 0$. Let the set of such channels mapping states between $d$-dimensional systems be denoted by
\begin{equation}
\mathcal{Q}^{(d)}_0 = \{\mathcal{N}_{B\rightarrow \tilde{B}} | Q(\mathcal{N}_{B\rightarrow \tilde{B}} ) = 0\}.    
\end{equation}
Our key claim is that $NCEB^{(d)} = \mathcal{Q}^{(d)}_0$, the details of which are discussed below.
\begin{lemma}
\label{lemma:pure_state}
 If a channel breaks negative conditional entropy on pure states, it will break negative conditional entropy for all states.
\end{lemma}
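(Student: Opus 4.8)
The plan is to reduce the general (mixed-input) case to the pure-state hypothesis by exploiting the linearity of the channel together with the \emph{concavity of conditional entropy}. The whole argument hinges on the single fact that the map $\rho_{AB} \mapsto S(A|B)_\rho$ is concave; once that is in hand, the lemma is immediate.

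First I would establish concavity of conditional entropy. Given a decomposition $\rho_{AB} = \sum_i p_i \rho^i_{AB}$, I introduce a classical flag register $X$ recording the index and form
\begin{equation}
\rho_{XAB} = \sum_i p_i \, |i\rangle\langle i|_X \otimes \rho^i_{AB}.
\end{equation}
On this state one has $S(A|BX)_{\rho_{XAB}} = \sum_i p_i \, S(A|B)_{\rho^i}$, since the conditioning register is classical. Strong subadditivity, in the form $I(A;X|B) \ge 0$, gives $S(A|B)_\rho \ge S(A|BX)_{\rho_{XAB}}$, so that $S(A|B)_\rho \ge \sum_i p_i \, S(A|B)_{\rho^i}$. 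This is exactly concavity.

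Next I would take an arbitrary input $\rho_{AB} \in D(\mathcal{H}_{AB})$ and write a pure-state decomposition $\rho_{AB} = \sum_i p_i \, |\phi_i\rangle\langle\phi_i|_{AB}$, where each $|\phi_i\rangle_{AB}$ is a bipartite pure state in the same $d\otimes d$ space. Applying the channel and using linearity,
\begin{equation}
\sigma_{A\tilde{B}} = (id_A \otimes \mathcal{N}_{B\rightarrow\tilde{B}})(\rho_{AB}) = \sum_i p_i \, \sigma^i_{A\tilde{B}}, \qquad \sigma^i_{A\tilde{B}} = (id_A \otimes \mathcal{N}_{B\rightarrow\tilde{B}})(|\phi_i\rangle\langle\phi_i|),
\end{equation}
so that $\sigma_{A\tilde{B}}$ is a convex mixture of outputs arising from \emph{pure} inputs. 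By the hypothesis that the channel breaks negative conditional entropy on pure states, $S(A|\tilde{B})_{\sigma^i} \ge 0$ for every $i$. Invoking concavity,
\begin{equation}
S(A|\tilde{B})_\sigma \ge \sum_i p_i \, S(A|\tilde{B})_{\sigma^i} \ge 0,
\end{equation}
which shows the channel breaks negative conditional entropy on the arbitrary state $\rho_{AB}$, hence on all states.

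The only nontrivial ingredient, and the step I would flag as the main obstacle, is the concavity of conditional entropy; everything else is linearity of the channel and the pure-state hypothesis. Concavity itself is a standard consequence of strong subadditivity as sketched above, so if one is willing to cite it directly the proof collapses to two lines. I would therefore either cite concavity of $S(A|B)$ as a known property or include the short flag-register argument for self-containedness.
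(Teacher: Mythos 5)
Your proof is correct and follows essentially the same route as the paper: decompose the arbitrary input into pure states, use linearity of $id_A \otimes \mathcal{N}$ to write the output as a convex mixture of pure-state outputs, and invoke concavity of conditional entropy to conclude $S(A|\tilde{B})_\sigma \ge \sum_i p_i S(A|\tilde{B})_{\sigma^i} \ge 0$. The only difference is that you supply a self-contained derivation of concavity via a classical flag register and strong subadditivity, whereas the paper simply cites concavity as known.
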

\begin{proof}
We know that $\mathcal{D}(\mathcal{H}_{AB})$ is a convex and compact set with pure states being located on the boundary.Let $\Pi_i$ represent a pure state in $D(\mathcal{H}_{AB})$. Thus, any $\rho_{AB} \in D(\mathcal{H}_{AB})$ is expressed as 
\begin{equation}
\rho_{AB} = \sum_i p_i \Pi_i    
\end{equation}
where all $p_i \ge 0$ and $\sum_i p_i =1$. The action of a  given channel $\mathcal{N}_{B\rightarrow \tilde{B}}$ on one part of a bipartite system $\rho_{AB}$ is given by
\begin{equation}
(id_A \otimes \mathcal{N}_{B\rightarrow \tilde{B}})(\rho_{AB}) = \sum_i p_i (id_A \otimes\mathcal{N}_{B\rightarrow \tilde{B}})(\Pi_i).    
\end{equation} Suppose that $\sigma_i = (id_A \otimes \mathcal{N}_{B\rightarrow \tilde{B}})(\Pi_i)$. Consequently, any output state $\sigma_{A\tilde{B}}$ of the channel is given by
\begin{equation}
\begin{split}
    \sigma_{A\tilde{B}} & = (id_A \otimes \mathcal{N}_{B\rightarrow \tilde{B}})(\rho_{AB})\\
    & = \sum_i p_i (id_A \otimes \mathcal{N}_{B\rightarrow \tilde{B}})(\Pi_i)\\
    & = \sum_i p_i \sigma_i
\end{split}
\end{equation} Through the concavity of conditional entropy, we have 
\begin{equation}
    S(A|\tilde{B})_\sigma \ge \sum_i p_i S(A|\tilde{B})_{\sigma_i}
\end{equation}
\noindent Therefore, it suffices to characterize negative conditional entropy breaking channels on pure states.
\end{proof}

\begin{lemma}
\label{lemma:coh}
    The coherent information of a channel $\mathcal{N}_{B\rightarrow \tilde{B}}$ depends on the conditional entropy of the output bipartite state $\sigma_{A\tilde{B}}$, given a pure state input.
\end{lemma}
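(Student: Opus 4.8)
The plan is to unpack the two definitions from the Preliminaries and observe that, for a pure bipartite input, the coherent information of the output state is exactly the negative of its conditional entropy; everything else follows from the definition of $Q(\mathcal{N})$. First I would fix a pure input $\phi_{AA'}$ to the channel acting on the $B$ (equivalently $A'$) subsystem and write the output as $\sigma_{A\tilde{B}} = (id_A \otimes \mathcal{N}_{B\rightarrow\tilde{B}})(\phi_{AA'})$. Recalling from Section~\ref{sec:preliminaries_coh_info} that the coherent information of the output state is $I(A\rangle\tilde{B})_\sigma = S(\tilde{B})_\sigma - S(A\tilde{B})_\sigma$, and from Section~\ref{sec:preliminaries_info_measures} that the conditional entropy is $S(A|\tilde{B})_\sigma = S(A\tilde{B})_\sigma - S(\tilde{B})_\sigma$, one sees that the two quantities differ only by an overall sign.

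The key step is then the identity
\begin{equation}
I(A\rangle\tilde{B})_\sigma = -\,S(A|\tilde{B})_\sigma,
\end{equation}
obtained by adding the two displayed definitions. Substituting this into the channel coherent information of Eq.~\eqref{eqn:coh_inf_chnl} gives
\begin{equation}
Q(\mathcal{N}_{B\rightarrow\tilde{B}}) = \max_{\phi_{AA'}} I(A\rangle\tilde{B})_\sigma = -\min_{\phi_{AA'}} S(A|\tilde{B})_\sigma,
\end{equation}
so that maximizing coherent information over pure inputs is the same as minimizing the output conditional entropy over pure inputs. This is precisely the asserted dependence of the channel's coherent information on the conditional entropy of the output bipartite state.

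There is no substantive obstacle here: the entire content is the sign relationship between the two information measures, and the maximization in the definition of $Q$ is what converts it into a statement about the extremal conditional entropy. The only point requiring care is bookkeeping of the optimization domain, namely that $Q$ is defined by a maximization over pure inputs $\phi_{AA'}$, which is exactly the regime to which Lemma~\ref{lemma:pure_state} has already reduced the analysis of NCEB channels. Read together, these two lemmas set up the equivalence $NCEB^{(d)} = \mathcal{Q}^{(d)}_0$: a channel satisfies $Q(\mathcal{N}_{B\rightarrow\tilde{B}})=0$ if and only if $\min_{\phi} S(A|\tilde{B})_\sigma = 0$, i.e.\ the output conditional entropy is non-negative for every pure input and hence, by Lemma~\ref{lemma:pure_state}, for every input.
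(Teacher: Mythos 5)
Your proposal is correct and follows essentially the same route as the paper: fix a pure input, observe $I(A\rangle\tilde{B})_\sigma = S(\tilde{B})_\sigma - S(A\tilde{B})_\sigma = -S(A|\tilde{B})_\sigma$, and substitute into the definition of $Q(\mathcal{N})$ as a maximization over pure inputs. The rewriting of $Q$ as $-\min_\phi S(A|\tilde{B})_\sigma$ is a harmless (and slightly cleaner) restatement of the same observation.
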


\begin{proof}
Consider any pure state $\Pi_i \in D(\mathcal{H}_{AB})$ and let $\sigma_{A\tilde{B}} = (id_A \otimes \mathcal{N}_{B\rightarrow \tilde{B}})(\Pi_i)$.  Coherent information of the output state $\sigma_{A\tilde{B}}$ is given by 
\begin{equation}
 I(A\rangle \tilde{B})_\sigma = S(\tilde{B})_\sigma - S(A\tilde{B})_\sigma = -S(A|\tilde{B})_\sigma  
\end{equation}
 and overall, the coherent information of the channel is 
 \begin{equation}
 Q(\mathcal{N}_{B\rightarrow\tilde{B}}) = \max_{\Pi_i} I(A \rangle \tilde{B})_\sigma.    
 \end{equation}
Thus, for pure states, the conditional entropy of the output states is related to coherent information of the channel.
\end{proof}
    
\begin{theorem}
The set of all negative conditional entropy breaking channels is the same as zero coherent information channels, i.e., $NCEB^{(d)} = \mathcal{Q}^{(d)}_0$.
\end{theorem}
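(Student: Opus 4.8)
The plan is to establish the set equality $NCEB^{(d)} = \mathcal{Q}^{(d)}_0$ by a double inclusion, with both directions collapsing onto a single observation once the two preceding lemmas are in hand. By Lemma~\ref{lemma:pure_state}, membership in $NCEB^{(d)}$ is decided entirely on pure inputs: a channel is NCEB if and only if $S(A|\tilde{B})_\sigma \ge 0$ for every output $\sigma_{A\tilde{B}} = (id_A \otimes \mathcal{N}_{B\rightarrow\tilde{B}})(\Pi)$ arising from a pure $\Pi \in D(\mathcal{H}_{AB})$. By Lemma~\ref{lemma:coh}, for any such pure input the output coherent information satisfies $I(A\rangle\tilde{B})_\sigma = -S(A|\tilde{B})_\sigma$, and the channel's coherent information is $Q(\mathcal{N}_{B\rightarrow\tilde{B}}) = \max_\Pi I(A\rangle\tilde{B})_\sigma$. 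Hence the requirement $S(A|\tilde{B})_\sigma \ge 0$ on all pure inputs is precisely $I(A\rangle\tilde{B})_\sigma \le 0$ on all pure inputs, which is precisely $Q(\mathcal{N}_{B\rightarrow\tilde{B}}) \le 0$.

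First I would record the elementary fact that $Q(\mathcal{N}) \ge 0$ for every channel, obtained by feeding the channel a pure input whose reference system $A$ is trivial, so that the value $I(A\rangle\tilde{B}) = S(\tilde{B}) - S(\tilde{B}) = 0$ is always attainable in the maximization. Combining this with the previous paragraph gives both inclusions at once. If $\mathcal{N} \in NCEB^{(d)}$, then $Q(\mathcal{N}) \le 0$; together with $Q(\mathcal{N}) \ge 0$ this forces $Q(\mathcal{N}) = 0$, so $\mathcal{N} \in \mathcal{Q}^{(d)}_0$. Conversely, if $\mathcal{N} \in \mathcal{Q}^{(d)}_0$, then $\max_\Pi I(A\rangle\tilde{B})_\sigma = 0$, hence $I(A\rangle\tilde{B})_\sigma \le 0$ and therefore $S(A|\tilde{B})_\sigma \ge 0$ for every pure input; Lemma~\ref{lemma:pure_state} promotes this to all inputs, so $\mathcal{N} \in NCEB^{(d)}$.

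The step I expect to require the most care is matching the two optimization domains. The NCEB condition quantifies over all bipartite states $\rho_{AB}$ on $\mathcal{H}_A \otimes \mathcal{H}_B$ with $A$ and $B$ both $d$-dimensional, whereas $Q(\mathcal{N}_{B\rightarrow\tilde{B}})$ is defined as a maximum over pure $\phi_{AA'}$ with an a priori arbitrary reference system. The equivalence is clean only because the optimal reference for coherent information may be taken pure with $\dim A = \dim B = d$, so that the pure inputs seen by the coherent-information maximization coincide exactly with the pure $d \otimes d$ states used in Lemma~\ref{lemma:pure_state}. I would make this identification explicit before invoking Lemma~\ref{lemma:coh}, since without it the equality of the two maximizations, and hence the whole set equality, would be unjustified. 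Once the domains are aligned, the remaining inequalities are immediate and no further estimates are needed.
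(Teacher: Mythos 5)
Your proposal is correct and follows essentially the same route as the paper: reduce to pure inputs via Lemma~\ref{lemma:pure_state}, identify $I(A\rangle\tilde{B})_\sigma = -S(A|\tilde{B})_\sigma$ via Lemma~\ref{lemma:coh}, and combine with $Q(\mathcal{N}) \ge 0$ to get both inclusions. The only cosmetic differences are that the paper argues the reverse inclusion by contradiction (re-deriving the concavity step) where you argue it directly, and that your explicit remark about aligning the reference-system dimension in the two maximizations is a point the paper leaves implicit.
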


\begin{proof}
First, we show that $NCEB^{(d)} \subseteq \mathcal{Q}^{(d)}_0$ and then  $\mathcal{Q}^{(d)}_0\subseteq NCEB^{(d)}$.\\

\noindent Forward implication: $NCEB^{(d)} \subseteq \mathcal{Q}^{(d)}_0$:\\
Based on lemma \ref{lemma:pure_state}, it suffices to consider the action of the channel on pure states to characterize $NCEB^{(d)}$ for all $d \otimes d$ bipartite states. Let a channel $\mathcal{N}_{B\rightarrow \tilde{B}} \in NCEB^{(d)}$. By definition, $S(A|\tilde{B})_\sigma \ge 0, \sigma_{A\tilde{B}} = (id_A \otimes \mathcal{N}_{B\rightarrow \tilde{B}})(\rho_{AB})$ for any input state $\rho_{AB}$. Let $\chi_{A\tilde{B}} = (id_A \otimes \mathcal{N}_{B\rightarrow \tilde{B}})(\Pi_i)$ for some pure state $\Pi_i \in D(\mathcal{H}_{AB})$ and $I(A \rangle \tilde{B})_\chi$ be the coherent information of the output state. From definition, $I(A\rangle \tilde{B})_\chi = -S(A|\tilde{B})_\chi$. It follows that $I(A\rangle \tilde{B})_\chi \leq 0$ as $S(A|\tilde{B})_\chi \geq 0$.

\noindent From lemma \ref{lemma:coh}, it is clear that $Q(\mathcal{N}_{B\rightarrow \tilde{B}})$ must be evaluated using $I(A\rangle \tilde{B})$ for all output states $\chi$. Since the outputs of all pure state inputs have $I(A \rangle \tilde{B})_\chi \le 0$, it follows that $Q(\mathcal{N}_{B\rightarrow \tilde{B}})$ must be $0$ as coherent information of the channel is always non-negative. Hence $\mathcal{N}_{B\rightarrow \tilde{B}} \in \mathcal{Q}^{(d)}_0$. Since this holds true for an arbitrary channel in $NCEB^{(d)}$, it must hold true for all channels in the set. Therefore,
\begin{equation}
NCEB^{(d)} \subseteq \mathcal{Q}^{(d)}_0   . 
\end{equation}
\noindent Reverse implication: $\mathcal{Q}^{(d)}_0 \subseteq NCEB^{(d)}$:\\
Proof by contradiction - Assume that $\mathcal{Q}^{(d)}_0 \not \subseteq NCEB^{(d)}$ and let $\mathcal{N}_{B\rightarrow \tilde{B}} \in \mathcal{Q}^{(d)}_0$ but not in $NCEB^{(d)}$ This implies there exists a state $\rho_{AB} \in \mathcal{D}(\mathcal{H}_{AB})$ such that $S(A|\tilde{B})_\sigma < 0, \sigma_{A\tilde{B}} = (id_A \otimes \mathcal{N}_{B\rightarrow \tilde{B}})(\rho_{AB})$. We know that $\rho_{AB} = \sum_i p_i \Pi_i$ for pure states $\Pi_i$ $\in D(\mathcal{H}_{AB})$ with $\sum_i p_i =1$. Using this and the concavity of conditional entropy, we have that 
\begin{equation}
\begin{split}
    S(A|\tilde{B})_\sigma & = S(A|\tilde{B})_{\sum_i p_i \sigma_i} \\
    & \le S(A|\tilde{B})_{\sigma_i} < 0.
\end{split}
\end{equation}    
\noindent Thus, at least one pure state input exists for which the output state, under the channel, has negative conditional entropy. Recall that $Q(\mathcal{N}_{B\rightarrow \tilde{B}}) = \max_{\Pi_i} I(A \rangle \tilde{B})_{\sigma_i}$ and that $I(A\rangle \tilde{B})_{\sigma_i} = -S(A|\tilde{B})_{\sigma_i}$. It follows that if the output state of the channel for any pure state input has negative conditional entropy, it would imply that the coherent information of the channel is strictly positive and contradicts the fact that $\mathcal{N}_{B\rightarrow \tilde{B}} \in \mathcal{Q}^{(d)}_0$. Therefore, our assumption is incorrect, and we have 
\begin{equation}
NCEB^{(d)} = \mathcal{Q}^{(d)}_0 .
\end{equation}
\end{proof}

\subsection{Relation between NCEB  and MIB} \label{sec:relations_MIB}
We introduce a new type of quantum channel called Mutual Information Breaking (MIB) channel. Such a channel $\mathcal{N}_{B\rightarrow \tilde{B}}$ acts on one part of a $d \otimes d$ system and produces an output state whose quantum mutual information is zero i.e., $I(A; \tilde{B})_\sigma = 0 \text{ where } \sigma_{A\tilde{B}} =  (id_A \otimes \mathcal{N}_{B\rightarrow\tilde{B}})(\rho_{AB})$. Some examples include the total depolarizing channel and the total amplitude damping channel. When applied to a one-part of bipartite state, they transform it into a product state and reduce the mutual information to zero.

The set of MIB channels, denoted by $I^{(d)}_0$, includes all channels $\mathcal{N}_{B\rightarrow \tilde{B}}$ on $d$-dimensional systems that zero out mutual information, defined as
\begin{equation}
\begin{split}
    I^{(d)}_0 = \{\mathcal{N}_{B\rightarrow \tilde{B}} | I(A; \tilde{B})_\sigma = 0 \text{ with }\\
    \sigma_{A\tilde{B}} = (id_A \otimes \mathcal{N}_{B\rightarrow\tilde{B}})(\rho_{AB}),\\
    \forall \rho_{AB} \in D(\mathcal{H}_{AB})\}
\end{split}
\end{equation}
We prove that any mutual information-breaking channel is a conditional entropy breaking channel in the following theorem.
\begin{theorem}
$I^{(d)}_0 \subset NCEB^{(d)}$   
\end{theorem}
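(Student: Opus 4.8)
The plan is to show that every mutual-information-breaking channel produces output states with non-negative conditional entropy, hence lies in $NCEB^{(d)}$, and then exhibit a channel separating the two sets to make the inclusion strict. The containment direction hinges on the elementary identity relating mutual information and conditional entropy, $I(A;\tilde B)_\sigma = S(A)_\sigma - S(A|\tilde B)_\sigma$, which follows directly from the definitions given in Section~\ref{sec:preliminaries}.

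\textbf{Containment.} Let $\mathcal{N}_{B\rightarrow\tilde B} \in I^{(d)}_0$ and fix an arbitrary input $\rho_{AB} \in D(\mathcal{H}_{AB})$, writing $\sigma_{A\tilde B} = (id_A \otimes \mathcal{N}_{B\rightarrow\tilde B})(\rho_{AB})$. By the defining property of $I^{(d)}_0$ we have $I(A;\tilde B)_\sigma = 0$. Using the chain-rule form $I(A;\tilde B)_\sigma = S(A)_\sigma - S(A|\tilde B)_\sigma$, I would immediately obtain
\begin{equation}
S(A|\tilde B)_\sigma = S(A)_\sigma \ge 0,
\end{equation}
where non-negativity is just the non-negativity of the von Neumann entropy of the marginal $\sigma_A = \rho_A$. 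Since $\rho_{AB}$ was arbitrary, every output state has non-negative conditional entropy, so $\mathcal{N}_{B\rightarrow\tilde B} \in NCEB^{(d)}$ by definition. This establishes $I^{(d)}_0 \subseteq NCEB^{(d)}$, and the argument is essentially a one-line entropy identity — no optimization or convexity is needed here, since the mutual information already vanishes on \emph{every} input rather than just maximizing states.

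\textbf{Strictness.} To upgrade $\subseteq$ to the strict $\subset$ claimed in the statement, I must produce a channel that is NCEB but not MIB, i.e.\ one that always destroys negative conditional entropy yet leaves some nonzero correlation intact. A natural candidate is the qubit depolarizing channel $\mathcal{N}^2_p$ in a parameter window where it is NCEB but does not fully decorrelate the state. From the examples in Section~\ref{sec:NCEBA_Ex}, $\mathcal{N}^2_p$ is NCEB for $p > 0.2$, whereas it becomes fully depolarizing (and hence MIB, producing a product output) only at $p=1$. So for any $0.2 < p < 1$ the channel lies in $NCEB^{(d)} \setminus I^{(d)}_0$: acting on a maximally entangled input it yields an isotropic state that retains nonzero mutual information while having non-negative conditional entropy. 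Pointing to this explicit witness closes the inclusion as strict.

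\textbf{The main obstacle} is not the containment — that is a trivial entropy identity — but rather justifying strictness cleanly. One must verify that the chosen witness genuinely has $I(A;\tilde B)_\sigma > 0$ for at least one input while remaining NCEB for all inputs; the simplest rigorous route is to invoke the equivalence $NCEB^{(d)} = \mathcal{Q}^{(d)}_0$ proved earlier together with the known depolarizing hashing threshold, and then simply note that a non-trivial depolarizing channel ($p<1$) cannot reduce mutual information to zero on a maximally entangled input since its output is not a product state. I would make sure to phrase strictness in terms of a concrete channel rather than a bare non-inclusion, so that the reader sees an explicit member of $NCEB^{(d)} \setminus I^{(d)}_0$.
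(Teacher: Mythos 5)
Your containment argument is exactly the paper's proof: both reduce $I(A;\tilde{B})_\sigma = 0$ to $S(A|\tilde{B})_\sigma = S(A)_\sigma \ge 0$ via the same entropy identity and conclude membership in $NCEB^{(d)}$ for every input. Your explicit strictness witness (the qubit depolarizing channel with $0.2 < p < 1$, NCEB by the covariance/pure-state reduction but with nonzero output mutual information on a maximally entangled input) is correct and is actually more concrete than the paper, which only remarks informally after the proof that the reverse inclusion fails because states with non-negative conditional entropy can still carry positive mutual information.
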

\begin{proof}
Let $\mathcal{N}_{B\rightarrow \tilde{B}} \in I^{(d)}_0$ be arbitrary channel. The action of the channel on a $d \otimes d$ state $\rho_{AB}$ results in a state $\sigma_{A\tilde{B}}$. By definition, the mutual information of the resultant state will be,
\begin{eqnarray}
I(A: \tilde{B})_\sigma=S(A)_\sigma +S(\tilde{B})_\sigma -S(A \tilde{B})_\sigma = 0\nonumber\\
\implies S(A\tilde{B})_\sigma =S(A)_\sigma +S(\tilde{B})_\sigma.
\end{eqnarray}
The conditional entropy of the output state $\rho_{A\tilde{B}}$ in such a scenario will be
\begin{equation}
S(A| \tilde{B})_\sigma = S(A\tilde{B})_\sigma - S(\tilde{B})_\sigma = S(A)_\sigma \ge 0 .  
\end{equation}
\noindent This follows from the fact that the entropy of the quantum state is non-negative. Since this holds true for an arbitrary input state $\rho_{AB}$, we conclude that  $\mathcal{N}_{B\rightarrow \tilde{B}} \in NCEB^{(d)}$. Hence $I^{(d)}_0 \subset NCEB^{(d)}$
\end{proof}
However, the converse is not true i.e $NCEB^{(d)} \not \subseteq I^{(d)}_0 $. This is because quantum states with non-negative conditional entropy may also have positive values for mutual information.

\subsection{Relation between NCEB  and EB} \label{sec:relations_EB}
Entanglement Breaking (EB) channels are a crucial class of quantum channels in quantum information theory \cite{horodecki2003}. An EB channel $\mathcal{N}_{B\rightarrow\tilde{B}}$ is defined as:
\begin{align}
(id_A \otimes \mathcal{N}_{B\rightarrow\tilde{B}})(\sigma_{AB}) = \sum_i p_i \rho^i_A \otimes \rho^i_B,
\end{align}
for all $\sigma_{AB} \in \mathcal{D}(\mathcal{H}_{AB})$. The following theorem establishes a connection between EB channels and NCEB channels.

\begin{theorem}
Given an entanglement breaking channel $\mathcal{N}_{B\rightarrow\tilde{B}}$ acting on a $d$-dimensional system, $\mathcal{N}_{B\rightarrow\tilde{B}}$ is also a NCEB channel.
\end{theorem}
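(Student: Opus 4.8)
The plan is to show that the separable output guaranteed by the entanglement-breaking property automatically has non-negative conditional entropy, so that the channel satisfies the defining inclusion of $NCEB^{(d)}$. By the definition of an EB channel, for every input $\rho_{AB}$ the output $\sigma_{A\tilde{B}} = (id_A \otimes \mathcal{N}_{B\rightarrow\tilde{B}})(\rho_{AB})$ can be written as $\sum_i p_i\,\rho^i_A \otimes \rho^i_{\tilde{B}}$. It therefore suffices to prove the purely state-level fact that any separable state satisfies $S(A|\tilde{B}) \ge 0$; the theorem follows at once, since the inclusion $(id_A\otimes\mathcal{N}_{B\rightarrow\tilde{B}})(D(\mathcal{H}_{AB})) \subset \mathcal{S}_{CVENN}(\mathcal{H}_{A\tilde{B}})$ is exactly the definition of NCEB.

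First I would lift the separable decomposition to a classical-quantum state by appending a register $X$ that records the index $i$, defining $\omega_{XA\tilde{B}} = \sum_i p_i\,|i\rangle\langle i|_X \otimes \rho^i_A \otimes \rho^i_{\tilde{B}}$, so that $\mathrm{Tr}_X\,\omega_{XA\tilde{B}} = \sigma_{A\tilde{B}}$ and hence $S(A|\tilde{B})_\omega = S(A|\tilde{B})_\sigma$. The second step is a direct calculation of the fully conditioned entropy: because $X$ is classical and, conditioned on $X=i$, the $A\tilde{B}$ block is the product $\rho^i_A \otimes \rho^i_{\tilde{B}}$, the shared $H(\{p_i\})$ and $\sum_i p_i S(\rho^i_{\tilde{B}})$ contributions cancel between $S(A\tilde{B}X)_\omega$ and $S(\tilde{B}X)_\omega$, leaving $S(A|\tilde{B}X)_\omega = \sum_i p_i\,S(\rho^i_A) \ge 0$, the inequality being just non-negativity of von Neumann entropy.

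The key link between these two steps, and the step I expect to carry the real weight, is the monotonicity of conditional entropy under an additional conditioning system, $S(A|\tilde{B}X)_\omega \le S(A|\tilde{B})_\omega$, which is a restatement of strong subadditivity. Chaining this with the two computations above gives $S(A|\tilde{B})_\sigma = S(A|\tilde{B})_\omega \ge S(A|\tilde{B}X)_\omega \ge 0$, the desired bound for every input, establishing $\mathcal{N}_{B\rightarrow\tilde{B}}\in NCEB^{(d)}$. An alternative route I would keep in reserve is the Nielsen--Kempe majorization relation $\lambda(\sigma_{A\tilde{B}}) \prec \lambda(\sigma_{\tilde{B}})$ valid for separable states, from which Schur-concavity of entropy yields $S(A\tilde{B})_\sigma \ge S(\tilde{B})_\sigma$ and hence $S(A|\tilde{B})_\sigma \ge 0$ directly; this shortens the argument but imports a nontrivial external result, so I would prefer the self-contained strong-subadditivity line as the main proof.
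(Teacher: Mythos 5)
Your proposal is correct and follows the same route as the paper: the entanglement-breaking property gives a separable output, and non-negativity of conditional entropy on separable states then yields the NCEB inclusion by definition. The only difference is that the paper simply asserts that separable states have non-negative conditional entropy, whereas you supply a proof of that fact (via the classical-quantum extension and strong subadditivity), which is a valid and welcome strengthening rather than a departure.
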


\begin{proof}
Since $\mathcal{N}_{B\rightarrow\tilde{B}}$ is an entanglement breaking channel, we express it as:
\begin{align}
(id_A \otimes \mathcal{N}_{B\rightarrow\tilde{B}})(\rho_{AB}) = \sum_i p_i \sigma^i_A \otimes \sigma^i_B ,
\end{align}
for some input state $\rho_{AB}$. Now, for any separable state, the conditional entropy is non-negative. Thus, by the definition of NCEB channels, $\mathcal{N}_{B\rightarrow\tilde{B}}$ is also an NCEB channel.
\end{proof}
\noindent The above result shows that the set of EB channels is a strict subset of NCEB channels. The subset relation arises because the range space of any EB channel is a strict subset of the set of states possessing non-negative conditional entropy \cite{vempati2021}. This relationship is also stated in a related work \cite{muhuri2023}.

\subsection{Relation between NCEB and Zero Capacity Channels} \label{sec:relations_ZC}
Zero capacity channels are an interesting class of channels in quantum communication and quantum information theory. Such channels do not possess the ability to transmit quantum information. While a complete characterization of zero capacity is still open, two well-known classes of quantum channels, namely positive partial transposition (PPT) channels \cite{Horodecki1996Sep,horodecki1997}, and anti-degradable channels (ADG) channels \cite{cubitt2008} are found to have zero capacity.

\begin{figure}[ht]
  \fbox{\includegraphics[width=7.5cm]{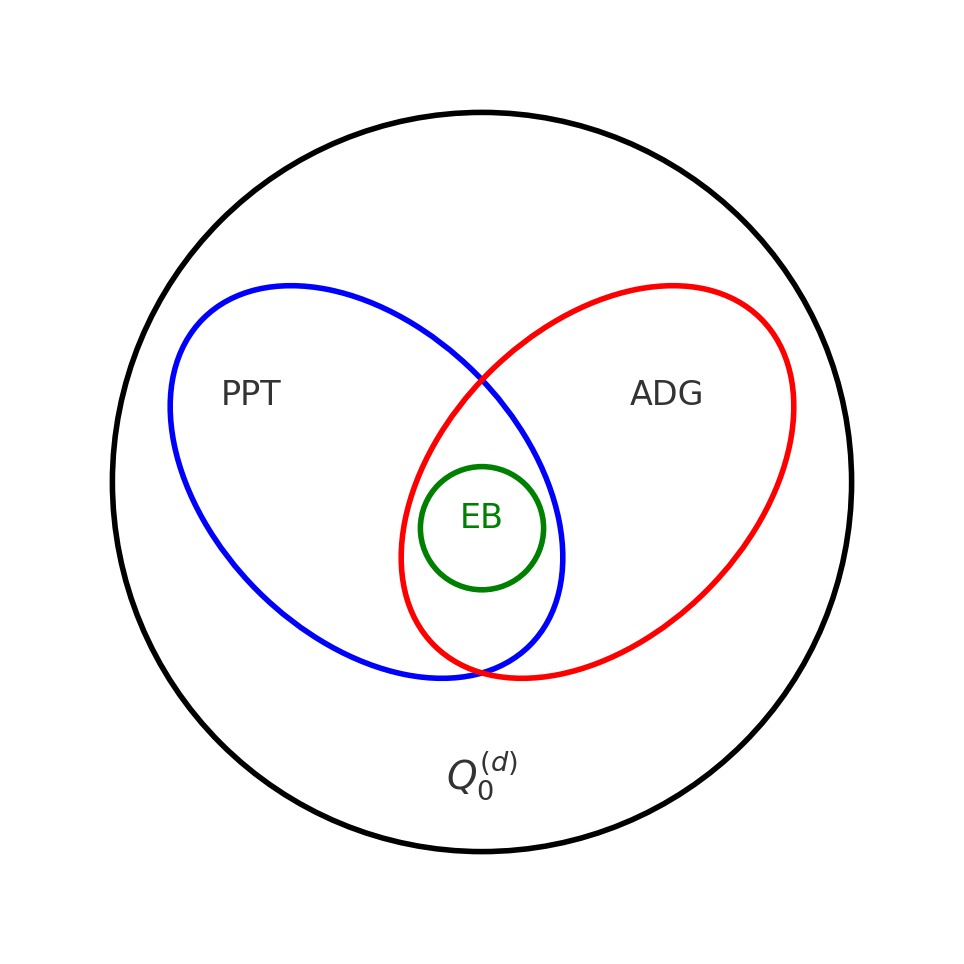}}
  \caption{The above figure depicts the relation with $Q^{(d)}_0$ with well-known zero capacity quantum channels, namely - PPT channels and anti-degradable (ADG) channels. Given the equivalence of $NCEB^{(d)}$ with $Q^{(d)}_0$, it implies that zero capacity quantum channels are also NCEB}
  \label{fig:adg_ppt}
\end{figure}

PPT channels are those quantum channels whose Choi state is a PPT state. Meanwhile, anti-degradable channels allow their output to be simulated from the complementary channel's output. Thus, whatever information was lost to the environment is sufficient to recreate what was sent through the channel, making the actual transmission almost irrelevant for information retrieval. The statement below establishes the connection between NCEB channels and zero capacity quantum channel
\begin{theorem}
  A quantum channel $\mathcal{N}_{B \rightarrow \tilde{B}}$ with zero quantum capacity is also an NCEB channel, i.e., $\mathcal{N}_{B \rightarrow \tilde{B}} \in NCEB^{(d)}$
\end{theorem}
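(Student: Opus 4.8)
The plan is to reduce the statement to the equivalence $NCEB^{(d)} = \mathcal{Q}^{(d)}_0$ already proved earlier, so that the whole argument collapses to a one-line relation between the quantum capacity and the single-letter coherent information. I would not try to characterise zero-capacity channels directly (a problem that is open in general); instead I would exploit the fact that membership in $\mathcal{Q}^{(d)}_0$ is decided purely by the single-shot coherent information $Q(\mathcal{N})$, which the equivalence theorem ties to the NCEB property.

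First I would recall the fundamental bound $\mathscr{Q}(\mathcal{N}) \ge Q(\mathcal{N}) \ge 0$, valid for every quantum channel. This follows from the definition $\mathscr{Q}(\mathcal{N}) = \lim_{n \to \infty} \tfrac{1}{n} Q(\mathcal{N}^{\otimes n})$ together with the superadditivity of coherent information: the $n=1$ term already lower-bounds the regularised limit. Next, starting from the hypothesis that $\mathcal{N}_{B \to \tilde{B}}$ has zero quantum capacity, i.e. $\mathscr{Q}(\mathcal{N}_{B \to \tilde{B}}) = 0$, the sandwich inequality immediately forces $Q(\mathcal{N}_{B \to \tilde{B}}) = 0$, so that $\mathcal{N}_{B \to \tilde{B}} \in \mathcal{Q}^{(d)}_0$.

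Finally I would invoke the equivalence $NCEB^{(d)} = \mathcal{Q}^{(d)}_0$ to conclude that $\mathcal{N}_{B \to \tilde{B}} \in NCEB^{(d)}$, completing the proof. The two named subclasses, PPT channels and anti-degradable channels, are then subsumed automatically: each is known to have zero quantum capacity, hence each lands in $\mathcal{Q}^{(d)}_0$ and therefore in $NCEB^{(d)}$, which is exactly what figure \ref{fig:adg_ppt} asserts.

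As for the main obstacle, there is essentially no computational difficulty; the only conceptual point to get right is the direction of the capacity bound. One must use $\mathscr{Q}(\mathcal{N}) \ge Q(\mathcal{N})$, so that a vanishing \emph{regularised} capacity drags the single-letter coherent information down to zero, rather than the reverse inequality. The substantive mathematical content has already been discharged in establishing $NCEB^{(d)} = \mathcal{Q}^{(d)}_0$; once that equivalence is in hand, this theorem is a direct corollary.
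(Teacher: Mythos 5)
Your proposal is correct and follows essentially the same route as the paper: both use the bound $\mathscr{Q}(\mathcal{N}) \ge Q(\mathcal{N}) \ge 0$ to conclude that zero quantum capacity forces $Q(\mathcal{N}) = 0$, and then invoke the equivalence $NCEB^{(d)} = \mathcal{Q}^{(d)}_0$. Your additional justification of the bound via superadditivity of the single-letter coherent information is a welcome extra detail the paper leaves implicit.
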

\begin{proof}
Consider a quantum channel $\mathcal{N}_{B \rightarrow \tilde{B}}$ acting on a $d$-dimension system with quantum capacity $\mathscr{Q}(\mathcal{N}_{B \rightarrow \tilde{B}}) = 0$. The coherent information of a quantum channel is non-negative and provides a lower bound for its quantum capacity, i.e., $\mathscr{Q}(\mathcal{N}) \ge  Q(\mathcal{N})$. It follows that
\begin{eqnarray}
    \mathscr{Q}(\mathcal{N}_{B\rightarrow \tilde{B}}) = 0 \implies Q(\mathcal{N}_{B \rightarrow \tilde{B}}) = 0.
\end{eqnarray}
Thus, $\mathcal{N}_{B\rightarrow\tilde{B}} \in \mathcal{Q}^{(d)}_0$ and hence, $\mathcal{N}_{B\rightarrow\tilde{B}} \in NCEB^{(d)}$
\end{proof}

\subsection{Relation between NCEA and NCVE channels} \label{sec:relations_NCVE}
We focus this discussion on the set of channels that are non-decreasing in quantum conditional entropy, referred to as NCVE. Let $AB$ be a quantum system with density operator $\rho_{AB} \in D(\mathcal{H}_{AB})$. Consider a channel $\mathcal{N}_{AB \rightarrow CD}$ and let $\sigma_{CD} = \mathcal{N}_{AB \rightarrow CD}(\rho_{AB})$. Such a channel is said to be NCVE if $S(C|D)_\sigma \ge S(A|B)_\rho$ for all $\rho_{AB} \in D(\mathcal{H}_{AB})$, and the set of all such channels is denoted by $NCVE(A|B \rightarrow C|D)$. $A$-unital represents another class of channels, where a channel $\mathcal{N}_{AB \rightarrow AB}$ is $A$-unital if for every $\rho_{B} \in D(\mathcal{H}_{B}), \mathcal{N}_{AB \rightarrow AB}(\frac{I_A}{d_A} \otimes \rho_B) = \frac{I_A}{d_A} \otimes \sigma_B$. It has been established that $NCVE(A|B)\text{ }(\text{i.e. }NCVE(A|B \rightarrow A|B)) = UNI(A|B)$ \cite{vempati2022}.

Let $\mathcal{N}_{B\rightarrow\tilde{B}}$ be a quantum channel in $NCEA^{(d)}$ where $B, \tilde{B}$ are $d$-dimensional systems. Let $B_1, B_2$, and $\tilde{B}_1, \tilde{B}_2$ represent a fixed bipartition of $B$ and $\tilde{B}$ respectively. Thus, the set of conditional entropy non-decreasing channels is denoted by $NCVE(B_1|B_2)$, and $A$-unital by $UNI(B_1|B_2)$. We touch upon an example of a channel common to these classes and proceed to demonstrate a nuanced relationship between $NCEA^{(d)}$ and $NCVE(B_1|B_2)$.

In section \ref{sec:NCEBA_Ex}, we examined the global depolarizing channel (equation \ref{eqn:global_dep_map}) as an example of NCEA channels. The action of such a map on a $d \otimes d$ system $B$  and obtain the following expression,
\begin{equation}
\begin{split}
    \mathcal{E}_{gd}(\frac{I_{B_1}}{d} \otimes \rho_{B_2}) & = p(\frac{I_{B_1}}{d} \otimes \rho_{B_2}) + (1-p) \frac{I_{B_1}}{d} \otimes \frac{I_{B_2}}{d}\\
    & = \frac{I_{B_1}}{d} \otimes (p \rho_{B_2} + (1-p) \frac{I_{B_2}}{d})\\
    & = \frac{I_{B_1}}{d} \otimes \sigma_{B_2}.
\end{split}
\end{equation}
Thus, the global depolarizing channel is $A$-unital and, consequently, is non-decreasing in conditional entropy. Furthermore, through lemma \ref{lemma:global_dep} and the action of the channel on a maximally entangled state (refer equation \ref{eqn:global_dep_map_choi}), we determine that the range for \( p \) aligns with the conditions for which the output state maintains non-negative conditional entropy.

Naturally, one could ask whether $NCEA^{(d^2)}$ and $NCVE(B_1|B_2)$ are equivalent or possess a intricate relationship. Despite the aforementioned example, we demonstrate that neither set is a subset of the other below:
\begin{itemize}
    \item Membership in $NCVE(B_1|B_2)$ does not necessarily imply inclusion in $NCEA^{(d^2)}$. While $\mathcal{N} \in NCVE(B_1|B_2)$ ensures $S(B_1|B_2)_\sigma \ge S(B_1|B_2)_\rho$, where $\sigma_B = \mathcal{N}(\rho)_B$, it does not guarantee $S(B_1|B_2)_\sigma \ge 0$. For instance, the identity channel $id_B$, included in $NCVE(B_1|B_2)$, does not qualify as an NCEA channel.
    \item Conversely, $NCEA^{(d)}$ is not a subset of $NCVE(B_1|B_2)$ since a non-negative conditional entropy for output states does not imply a greater conditional entropy compared to the input state.
\end{itemize}

\section{Characterization and detection of NCEA and NCEB}\label{sec:convex_compact}
In this section, we discuss the topological characterization of the NCEB and NCEA channels. We show that set $NCEB^{(d)}$ is convex and compact. This confirms the existence of a witness to detect channels that are not in $NCEB^{(d)}$. Identifying such channels is always beneficial in increasing or preserving negative conditional entropy, which is a resource. However, finding such a witness operator is computationally difficult, similar to the entanglement-separability problem. Similarly, we show that the set $NCEA^{(d)}$ is convex.

\subsection{Convexity and compactness of \texorpdfstring{$NCEB^{(d)}$}{NCEB{(d)}}}
\begin{theorem}
    $NCEB^{(d)}$ is \emph{convex}.
\end{theorem}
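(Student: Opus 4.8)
The goal is to show that $NCEB^{(d)}$ is convex: given $\mathcal{N}_1, \mathcal{N}_2 \in NCEB^{(d)}$ and $\lambda \in [0,1]$, the convex combination $\mathcal{N}_\lambda = \lambda \mathcal{N}_1 + (1-\lambda)\mathcal{N}_2$ must also lie in $NCEB^{(d)}$. The strategy is to take an arbitrary input $\rho_{AB} \in D(\mathcal{H}_{AB})$, push it through $id_A \otimes \mathcal{N}_\lambda$, and show the output has non-negative conditional entropy $S(A|\tilde{B})\ge 0$. The key observation is that the channel convex combination commutes with the tensoring by $id_A$, so that
\begin{equation}
(id_A \otimes \mathcal{N}_\lambda)(\rho_{AB}) = \lambda\,\sigma^{(1)}_{A\tilde{B}} + (1-\lambda)\,\sigma^{(2)}_{A\tilde{B}},
\end{equation}
where $\sigma^{(j)}_{A\tilde{B}} = (id_A \otimes \mathcal{N}_j)(\rho_{AB})$ for $j=1,2$.

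The main engine of the proof is the concavity of the conditional entropy in the state, the same property already invoked in Lemma \ref{lemma:pure_state}. Since each $\mathcal{N}_j$ is NCEB, we have $S(A|\tilde{B})_{\sigma^{(j)}} \ge 0$ for both $j$. Applying concavity to the output $\sigma_{A\tilde{B}} = \lambda \sigma^{(1)} + (1-\lambda)\sigma^{(2)}$ yields
\begin{equation}
S(A|\tilde{B})_{\sigma} \ge \lambda\, S(A|\tilde{B})_{\sigma^{(1)}} + (1-\lambda)\, S(A|\tilde{B})_{\sigma^{(2)}} \ge 0.
\end{equation}
Since $\rho_{AB}$ was arbitrary, this shows $\mathcal{N}_\lambda \in NCEB^{(d)}$, establishing convexity. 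One should also remark that $\mathcal{N}_\lambda$ is a legitimate quantum channel (CPTP), since the set of CPTP maps is itself convex; this is needed so that the statement $\mathcal{N}_\lambda \in NCEB^{(d)}$ is even meaningful.

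I do not anticipate a genuine obstacle here: the result reduces entirely to concavity of conditional entropy plus the linearity of the channel-to-output map. The only point requiring minor care is bookkeeping — making explicit that tensoring with $id_A$ preserves the convex decomposition and that concavity of $S(A|\tilde{B})$ is applied to the correct bipartition of the output state. Compactness (which the surrounding text promises separately) would be the harder half, requiring a limiting/continuity argument using the uniform continuity of conditional entropy noted in the preliminaries; but for the convexity statement alone, the concavity argument is direct and self-contained.
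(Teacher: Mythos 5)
Your proof is correct and follows essentially the same route as the paper: decompose the output state as $\lambda\sigma^{(1)} + (1-\lambda)\sigma^{(2)}$ via linearity, then conclude non-negativity of the conditional entropy. The only cosmetic difference is that you invoke concavity of conditional entropy directly, whereas the paper cites the convexity of the set $CVENN$ (which rests on the same concavity), so the arguments are interchangeable.
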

\begin{proof}
Let $AB$ be a $d \otimes d$ bipartite system. Consider two channels, $\mathcal{N}^1_{B\rightarrow \tilde{B}}, \mathcal{N}^2_{B\rightarrow \tilde{B}}$ from the set $NCEB^{(d)}$. From the definition of $NCEB^{(d)}$, we have
\begin{equation}
\begin{split}
    S(A|\tilde{B})_{\sigma^1} \ge 0 \\
    S(A|\tilde{B})_{\sigma^2} \ge 0  
\end{split}
\end{equation} where $\sigma^1_{A\tilde{B}} = id_A\otimes\mathcal{N}^1_{B\rightarrow \tilde{B}}(\rho_{AB})$ and $\sigma^2_{A\tilde{B}} = id_A\otimes\mathcal{N}^2_{B\rightarrow \tilde{B}}(\rho_{AB})$ and $\rho_{AB} \in D(\mathcal{H}_{AB})$.\\

\noindent Consider $\mathcal{N}_{B\rightarrow \tilde{B}} = \lambda \mathcal{N}^1_{B\rightarrow \tilde{B}} + (1 -\lambda) \mathcal{N}^2_{B\rightarrow \tilde{B}}$ for some $\lambda \in [0,1]$. For any input state $\rho_{AB} \in D(\mathcal{H}_{AB})$ ,
\begin{equation}
\begin{split}
    \sigma_{A\tilde{B}} = & \text{ } id_A\otimes\mathcal{N}_{B\rightarrow \tilde{B}}(\rho_{AB})\\
    = & \text{ } \lambda (id_A\otimes\mathcal{N}^1_{B\rightarrow \tilde{B}})(\rho_{AB}) \text{ }  + \\
    & (1 - \lambda)(id_A\otimes\mathcal{N}^2_{B\rightarrow \tilde{B}})(\rho_{AB})\\
    = & \text{ } \lambda \sigma^1_{A\tilde{B}} + (1-\lambda)\sigma^2_{A\tilde{B}}.
\end{split}
\end{equation}
\noindent Here,$\sigma^1_{A\tilde{B}} = id_A\otimes\mathcal{N}^1_{B\rightarrow \tilde{B}}(\rho_{AB})$ and $\sigma^2_{A\tilde{B}} = id_A\otimes\mathcal{N}^2_{B\rightarrow \tilde{B}}(\rho_{AB})$. The equivalences follow from the linearity of quantum channels and the distributive property of addition over tensor product.

\noindent From the definition of $NCEB^{(d)}$, we have 
\begin{eqnarray}
    S(A|\tilde{B})_{\sigma^1} \ge 0 & S(A|\tilde{B})_{\sigma^2} \ge 0.  
\end{eqnarray}
\noindent It is clear that $\sigma^1_{A\tilde{B}}, \sigma^2_{A\tilde{B}} \in CVENN$, the set of 
non-negative conditional entropy states 
Using convexity of $CVENN$ \cite{vempati2021}, it follows that $\sigma_{A\tilde{B}} \in CVENN$. Since this holds true for an arbitrary input  $\rho_{AB}$, we conclude that $\mathcal{N}_{B\rightarrow \tilde{B}} \in NCEB^{(d)}$, thereby proving convexity of the set. 
\end{proof}

\begin{theorem}
    $NCEB ^{(d)}$ is \emph{compact}
\end{theorem}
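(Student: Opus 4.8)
The plan is to establish compactness of $NCEB^{(d)}$ by realizing it as a closed subset of a compact ambient space. The natural ambient space is the set of all quantum channels $\mathcal{N}_{B \rightarrow \tilde{B}}$ between fixed finite-dimensional systems, which I would identify with its set of Choi matrices via the Choi-Jamio{\l}kowski isomorphism. The set of Choi matrices of CPTP maps is a closed and bounded subset of the (finite-dimensional) space of operators on $\mathcal{H}_B \otimes \mathcal{H}_{\tilde{B}}$ --- boundedness follows from the trace constraint and positivity, and closedness from the fact that complete positivity (Choi matrix positive semidefinite) and trace preservation (partial trace equals identity) are both closed conditions. Hence by Heine-Borel the ambient set of channels is compact. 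Since the previous theorem already gives convexity, it then suffices to show $NCEB^{(d)}$ is closed, because a closed subset of a compact set is compact.

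First I would set up a topology on channels (equivalently, the trace-norm or operator-norm topology on Choi matrices) and note that a sequence of channels $\mathcal{N}^{(n)}$ converging to $\mathcal{N}$ in this topology is equivalent to convergence of their Choi states, and that this in turn forces $(id_A \otimes \mathcal{N}^{(n)})(\rho_{AB}) \to (id_A \otimes \mathcal{N})(\rho_{AB})$ in trace norm for every fixed input $\rho_{AB}$. Next I would take an arbitrary convergent sequence $\{\mathcal{N}^{(n)}\} \subset NCEB^{(d)}$ with limit $\mathcal{N}$ and an arbitrary input state $\rho_{AB}$, writing $\sigma^{(n)}_{A\tilde{B}} = (id_A \otimes \mathcal{N}^{(n)})(\rho_{AB})$ and $\sigma_{A\tilde{B}} = (id_A \otimes \mathcal{N})(\rho_{AB})$. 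Each $\sigma^{(n)}$ satisfies $S(A|\tilde{B})_{\sigma^{(n)}} \ge 0$ by membership in $NCEB^{(d)}$.

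The key analytic input is the continuity of conditional entropy with respect to the trace norm, which is already recorded in the preliminaries (the Alicki-Fannes-Winter continuity bound cited as \cite{alicki2004, winter2016}). Since $\sigma^{(n)} \to \sigma$ in trace norm and the dimension is fixed and finite, this continuity gives $S(A|\tilde{B})_{\sigma^{(n)}} \to S(A|\tilde{B})_{\sigma}$. Passing to the limit in the inequality $S(A|\tilde{B})_{\sigma^{(n)}} \ge 0$ yields $S(A|\tilde{B})_{\sigma} \ge 0$. As $\rho_{AB}$ was arbitrary, $\mathcal{N} \in NCEB^{(d)}$, so the set is closed, and therefore compact.

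The main obstacle is the continuity step: I must ensure conditional entropy is genuinely continuous (not merely lower semicontinuous) on the relevant domain, so that the limit of a sequence of nonnegative values remains nonnegative. In finite dimension this is supplied directly by the uniform continuity of $S(A|\tilde{B})$ with respect to the trace norm quoted in Section~\ref{sec:preliminaries}, so the argument goes through cleanly; the one subtlety to state carefully is that convergence of channels (Choi states) must be correctly transferred to trace-norm convergence of outputs uniformly enough to invoke that bound, which holds because tensoring with the fixed identity $id_A$ and applying a fixed input $\rho_{AB}$ are trace-norm contractive operations.
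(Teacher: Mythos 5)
Your proof is correct, and it reaches closedness by a genuinely different (and tighter) route than the paper. Both arguments share the same skeleton: pass to Choi matrices, observe that the ambient set of CPTP maps is bounded in finite dimensions, and conclude compactness from ``closed and bounded'' --- the paper gets boundedness from the completely bounded trace norm being $1$, you get it from positivity plus the trace constraint on Choi matrices; these are interchangeable. The real divergence is in the closedness step. The paper argues that the conditional entropy functional has closed range $[0,\log d]$ on the set $\Phi_{NCEB^{(d)}}$ of Choi states (exhibiting channels attaining both endpoints) and then invokes uniform continuity of conditional entropy to conclude that $\Phi_{NCEB^{(d)}}$ is closed. That inference is logically delicate: continuity guarantees that \emph{preimages} of closed sets are closed, but $\Phi_{NCEB^{(d)}}$ is not presented as the preimage of $[0,\log d]$ under a single continuous map (a continuous function can have closed range on a non-closed domain), so the paper's closedness argument as written has a gap. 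Your sequential argument --- take $\mathcal{N}^{(n)}\to\mathcal{N}$, push the convergence through $id_A\otimes(\cdot)$ applied to each fixed input $\rho_{AB}$, invoke the Alicki--Fannes--Winter continuity of $S(A|\tilde{B})$ in trace norm, and pass to the limit in the inequality $S(A|\tilde{B})_{\sigma^{(n)}}\ge 0$ --- is the standard, airtight way to do this, and it is the argument the paper's proof effectively needs. Your closing caveat about needing full continuity rather than mere lower semicontinuity is exactly the right point to flag, and it is settled in finite dimensions by the continuity bound already quoted in the preliminaries.
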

\begin{proof} While the above statement has been established independently (refer \cite{muhuri2023}), we provide an alternative proof here.

\noindent Let $\Phi_{NCEB^{(d)}}$ be the set of Choi states of quantum channels in  $NCEB^{(d)}$. We claim that $\Phi_{NCEB^{(d)}}$ has a closed range under conditional entropy function, i.e., $S(A|\tilde{B})_{\Phi_{NCEB^{(d)}}} = [0, \log d]$. We establish the claim by providing examples of NCEB channels that achieve the lower and upper bounds. 

\noindent 
The minimum value of $0$ is achieved for a channel describing local projective measurement on the $B$ subsystem of a maximally entangled state. Whereas conditional entropy of $\log d$ is obtained for a completely depolarizing channel acting on $B$. It must be noted that both these channels are entanglement breaking and, hence, conditional entropy breaking (refer section \ref{sec:relations_EB}). 

\noindent Since the output range forms a closed interval and conditional entropy is a uniformly continuous function, it implies that $\Phi_{NCEB^{(d)}}$ is closed. Additionally, as channel-state isomorphism is continuous \cite{sohail2021} and $NCEB^{(d)}$ is the inverse image of the closed set $\Phi_{NCEB^{(d)}}$, it follows that $NCEB^{(d)}$ is closed.

\noindent Finally, it is known that the completely bounded trace norm of quantum channels is $1$ \cite{watrous2018}. Hence, channels in $NCEB^{(d)}$ are bounded maps. As $NCEB^{(d)}$ is closed and bounded, it must be compact.
\end{proof}

Since the set $NCEB^{(d)}$ is closed and compact, according to Hahn Banach \cite{holmes2012} theorem, there exists a linear functional or hyperplane that detects non-NCEB channels. We discuss one such functional below.

 \noindent \textbf{Existence of Witness Operators:} Witnesses to resourceful states are Hermitian operators whose expectation value aids in detecting such states. The motivation is to detect non-NCEB channels, which are useful for preserving negative conditional entropy. Thus, one procedure for detecting a non-NCEB channel $\mathcal{N}$ involves applying an unknown channel $\mathcal{N}$ to one part of a pure state and then using a suitable witness operator on the output state to determine whether it possesses negative conditional entropy (refer \cite{muhuri2023} for details). The key aspect of this procedure lies in identifying the pure state upon which the channel is applied. On the other hand, using the equivalence of $NCEB^{(d)}$ to $d$-dimensional CIB channels (\ref{sec:relations_CIB}), one can use methods to detect channels having positive coherent information \cite{siddhu_2021, singh2021detecting}. The methods require knowledge of the channel in terms of the output and environment dimensions and the eigenspace of the output system, which may not be known prior for an unknown channel.
 
Here, we use the notion of functional built-on distance-based measures to detect non-NCEB channels. Distance-based measures have been used extensively to identify entangled states in the context of the separability-entanglement problem \cite{PITTENGER200247, verdal_1998, Streltsov_2010, pandya2023minimum}. Since the set of separable states is convex and compact, there will be a closest separable state to an entangled state. In a similar analogy, our witness is based on the concept of the nearest NCEB channel. We would, however, also like to point out that finding such a nearest channel is computationally a difficult task similar to the situation in the separability-entanglement problem. The distance-based functional is described below:

\noindent Consider a scalar-valued functional
 \begin{equation}
     W(\Psi) = \max (0, \inf_{\Phi \in NCEB^{(d)}}\|\Psi - \Phi\|_\lozenge)
 \end{equation} where $\|.\|_\lozenge$ represents the diamond norm on quantum channels \cite{watrous2018}. From the definition, if $\Psi \in NCEB^{(d)}$, then $W(\Psi) = 0$ and if $W(\Psi) > 0$ it implies $\Psi \not \in NCEB^{(d)}$. Due to compactness of $NCEB^{(d)}$, there will be a channel $\Phi' \in NCEB^{(d)}$ that is nearest to $\Psi$. Thus, $W(\Psi)$ constitutes a witness for non-NCEB channels. A similar witness can also be constructed to detect non-NCEA channels based on the characterization in the next subsection.
\subsection{Convexity of \texorpdfstring{$NCEA^{(d)}$}{NCEA{(d)}}}
\begin{theorem}
    $NCEA^{(d)}$ is \emph{convex}.
\end{theorem}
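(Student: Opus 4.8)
The plan is to mirror the argument just used for the convexity of $NCEB^{(d)}$, exploiting the simplification that an NCEA channel acts on the entire input system rather than on one half of a bipartite state, so no ancillary identity factor $id_A$ is required. First I would fix two channels $\mathcal{N}^1_{B\rightarrow\tilde{B}}, \mathcal{N}^2_{B\rightarrow\tilde{B}} \in NCEA^{(d)}$ and a mixing parameter $\lambda \in [0,1]$, and define the candidate channel $\mathcal{N}_{B\rightarrow\tilde{B}} = \lambda\,\mathcal{N}^1_{B\rightarrow\tilde{B}} + (1-\lambda)\,\mathcal{N}^2_{B\rightarrow\tilde{B}}$, which is again a valid CPTP map since convex combinations of channels are channels.

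Next, for an arbitrary input $\rho_B \in D(\mathcal{H}_B)$ I would set $\epsilon^1 = \mathcal{N}^1_{B\rightarrow\tilde{B}}(\rho_B)$ and $\epsilon^2 = \mathcal{N}^2_{B\rightarrow\tilde{B}}(\rho_B)$. By the definition of NCEA these outputs satisfy $S(\tilde{B}_1|\tilde{B}_2)_{\epsilon^1} \ge 0$ and $S(\tilde{B}_1|\tilde{B}_2)_{\epsilon^2} \ge 0$, i.e.\ both lie in $\mathcal{S}_{CVENN}(\mathcal{H}_{\tilde{B}})$. Invoking the linearity of quantum channels, the output of the mixed channel is $\epsilon_{\tilde{B}} = \mathcal{N}_{B\rightarrow\tilde{B}}(\rho_B) = \lambda\,\epsilon^1 + (1-\lambda)\,\epsilon^2$.

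The decisive step is then to verify $S(\tilde{B}_1|\tilde{B}_2)_\epsilon \ge 0$. This follows from the concavity of the conditional entropy (the same property invoked in Lemma \ref{lemma:pure_state}), which gives
\begin{equation}
S(\tilde{B}_1|\tilde{B}_2)_\epsilon \ge \lambda\, S(\tilde{B}_1|\tilde{B}_2)_{\epsilon^1} + (1-\lambda)\, S(\tilde{B}_1|\tilde{B}_2)_{\epsilon^2} \ge 0 ,
\end{equation}
or equivalently from the convexity of $\mathcal{S}_{CVENN}(\mathcal{H}_{\tilde{B}})$ established in \cite{vempati2021}, since $\epsilon^1,\epsilon^2 \in \mathcal{S}_{CVENN}(\mathcal{H}_{\tilde{B}})$ forces their convex combination into the same set. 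Because $\rho_B$ was arbitrary, $\mathcal{N}_{B\rightarrow\tilde{B}}$ maps all of $D(\mathcal{H}_B)$ into $\mathcal{S}_{CVENN}(\mathcal{H}_{\tilde{B}})$, so $\mathcal{N}_{B\rightarrow\tilde{B}} \in NCEA^{(d)}$, proving convexity.

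I do not anticipate any serious obstacle: the only nonroutine ingredient is the concavity of conditional entropy (equivalently, the convexity of the CVENN set), both of which are already available from the cited work and used elsewhere in this paper. The single point requiring care is that the fixed bipartition $\tilde{B}_1$--$\tilde{B}_2$ of the output system is common to all three channels, so that the concavity inequality is legitimately applied with respect to one and the same partition; once this is noted, the argument closes immediately.
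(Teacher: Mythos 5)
Your proposal is correct and follows essentially the same route as the paper's proof: form the convex combination of the two channels, use linearity to write the output as a convex combination of the individual outputs, and apply concavity of conditional entropy (equivalently, convexity of the CVENN set) to conclude non-negativity. The remark about keeping the bipartition $\tilde{B}_1$--$\tilde{B}_2$ fixed is a sensible point of care, and nothing further is needed.
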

\begin{proof}
    Consider two quantum channels $\mathcal{N}^1_{B \rightarrow \tilde{B}}, \mathcal{N}^2_{B \rightarrow \tilde{B}}$ 
    from $NCEA^{(d)}$. For any given input density operator $\rho_B \in D(\mathcal{H}_B)$,
    \begin{equation}
    \begin{split}
        S(\tilde{B}_1|\tilde{B}_2)_{\sigma^1_{\tilde{B}}} \ge 0 \\
        S(\tilde{B}_1|\tilde{B}_2)_{\sigma^2_{\tilde{B}}} \ge 0,
    \end{split}
    \end{equation}
    with $\sigma^1_{\tilde{B}} = \mathcal{N}^1_{B\rightarrow \tilde{B}}(\rho_{B})$ and $\sigma^2_{\tilde{B}} = \mathcal{N}^2_{B\rightarrow \tilde{B}}(\rho_{B})$.\\
    
    \noindent Let $\mathcal{N}_{B\rightarrow\tilde{B}} = \lambda \mathcal{N}^1_{B \rightarrow \tilde{B}} + (1 -\lambda) \mathcal{N}^2_{B \rightarrow \tilde{B}}$ for some $0 \le \lambda \le1$ and consider the action of $\mathcal{N}_{B\rightarrow\tilde{B}}$ on $\rho_B$. Thus, 
    \begin{equation}
    \begin{split}
        \epsilon_{\tilde{B}} = & \mathcal{N}_{B\rightarrow \tilde{B}}(\rho_{B})\\
        = & \lambda \mathcal{N}^1_{B\rightarrow \tilde{B}}(\rho_{B}) + (1 - \lambda) \mathcal{N}^2_{B\rightarrow \tilde{B}}(\rho_{B})\\
        = & \lambda \epsilon^1_{\tilde{B}} + (1-\lambda)\epsilon^2_{\tilde{B}},
    \end{split}
    \end{equation}
    where $\epsilon^1_{\tilde{B}} = \mathcal{N}^1_{B\rightarrow\tilde{B}}(\rho_B)$ and $\epsilon^2_{\tilde{B}} = \mathcal{N}^2_{B\rightarrow\tilde{B}}(\rho_B)$. The conditional entropy across the $\tilde{B}_1 - \tilde{B}_2$ partition is expressed as
    \begin{equation}
    \begin{split}
         S(\tilde{B}_1|\tilde{B}_2)_{\epsilon} & = S(\tilde{B_1}|\tilde{B}_2)_{\lambda \epsilon^1 + (1-\lambda) \epsilon^2}\\
        & \ge \lambda S(\tilde{B}_1|\tilde{B}_2)_{\epsilon^1} + (1-\lambda)S(\tilde{B}_1|\tilde{B}_2)_{\epsilon^2} \\
        & \ge 0.
    \end{split}
    \end{equation}
    The first inequality follows from the concavity conditional entropy and the second from the definition of $NCEA^{(d)}$ channels. Therefore, the set of $NCEA^{(d)}$ must be convex.
\end{proof}
Finally, $NCEA^{(d)}$ is compact, and the proof can be constructed using techniques analogous to the one used in \cite{muhuri2023}.

\section{Conclusions} \label{sec:conclusion}
 In this article, we have dealt with channels that destroy the negative conditional entropy of a quantum state. Since negative conditional entropy is a resource, these channels can be broadly classified as resource-breaking channels. In particular, we have extended the characterization of negative conditional entropy breaking (NCEB) channels in further depth and detail. We have examined the properties of NCEB channels when combined serially and in parallel and investigated their complementary channels. This has provided us insights into the information-leaking nature of NCEB channels. In addition, we have introduced a class of channels called the negative conditional entropy annihilating (NCEA) channels and examined its properties, including serial combination. In this work, we take depolarizing channels to give examples and further characterize these channels.\\ 
 
Furthermore, we have connected NCEB and NCEA channels with standard channels like zero capacity channels, Entanglement Breaking (EB) channels, conditional von Neumann entropy non-decreasing (NCVE) channels, along with newly introduced channels like Coherent Information Breaking (CIB) channels, and Mutual Information Breaking (MIB) channels. We covered a topological characterization of NCEB channels by showing that the set containing them is convex and compact.  This empowers us to detect channels that will not break the negativity of conditional entropy, ensuring the conservation of quantum resources. A complete characterization of the NCEB and NCEA channels in the form of Kraus representation is an important direction for future research.\\ 
 
\noindent \textit{Acknowledgements :} NG acknowledges support from the project grant received from DST-SERB (India) under the MATRICS scheme, vide file number MTR/2022/000101. PV thanks Ms. Mahathi Vempati for useful discussions while working through this problem. PV also thanks Shirley Chauhan and Rutvij Menavlikar for their suggestion for improving the presentation of this work. 
\bibliography{references}
\bibliographystyle{apsrev4-2}

\end{document}